\newtheorem{definition}{Definition}
\newcounter{rcounter}
\newtheorem{remark}{Remark}
\newtheorem{theorem}{Theorem}
\newtheorem{lemma}{Lemma}
\newtheorem{corollary}{Corollary}
\newtheorem{proposition}{Proposition}
\def\ScaleIfNeeded{%
\ifdim\Gin@nat@width>\linewidth \linewidth \else \Gin@nat@width
\fi } \makeatother
\begin{document}
%
\title{ Optimal User Scheduling and Power Allocation for Millimeter Wave NOMA Systems}

\vspace{-2.9em}
\author{
Jingjing Cui,~\IEEEmembership{Student Member, IEEE,} 
Yuanwei~Liu,~\IEEEmembership{Member,~IEEE,}\\
Zhiguo Ding,~\IEEEmembership{Senior Member, IEEE,} 
Pingzhi Fan,~\IEEEmembership{Fellow, IEEE} and  
Arumugam~Nallanathan,~\IEEEmembership{Fellow,~IEEE,}

\thanks{J. Cui and P. Fan are  with the Institute of Mobile Communications, Southwest Jiaotong University, Chengdu 610031, P. R. China. (email: cuijingj@foxmail.com, p.fan@ieee.org).}

\thanks{Y. Liu and A. Nallanathan are with the Department of Informatics, King's College London, London WC2R 2LS, U.K. (email: \{yuanwei.liu, arumugam.nallanathan\}@kcl.ac.uk).}

\thanks{ Z. Ding is  with the School of Computing and Communications, Lancaster University, LA1 4YW, UK. (e-mail:
z.ding@lancaster.ac.uk). }


\thanks{Part of this work submitted in IEEE Global Communication Conference (GLOBECOM),  Dec. Singapo, 2017 \cite{17GlbCom}.}

\thanks{}  }

\maketitle
\vspace{-2.0em}
\begin{abstract}
This paper investigates the application of non-orthogonal multiple access (NOMA) in millimeter wave (mmWave) communications by exploiting beamforming, user scheduling and power allocation. 
Random beamforming  is invoked for reducing the  feedback overhead of considered systems.  A  non-convex optimization problem for maximizing the  sum rate is formulated, which  is proved to be NP-hard.   The branch and bound (BB) approach is invoked to obtain the $\epsilon$-optimal power allocation policy, which is proved to converge to a  global optimal  solution. 
To elaborate further, a low complexity suboptimal approach is developed for striking a good  computational complexity-optimality tradeoff, where matching theory and successive
convex approximation (SCA) techniques  are  invoked for tackling the  user scheduling and power allocation problems, respectively.
 Simulation results reveal that: i) the proposed low complexity   solution achieves a near-optimal performance; and ii)  the proposed  mmWave NOMA systems is capable of  outperforming  conventional mmWave orthogonal multiple access (OMA) systems in terms of sum rate and the number of served users.
\end{abstract}
\vspace{-0.5em}
\begin{IEEEkeywords}
\vspace{-0.5em}
Millimeter wave (mmWave), non-orthogonal multiple access (NOMA), power allocation, user scheduling
\end{IEEEkeywords}

\vspace{-0.5em}
\section{Introduction}
\vspace{-0.3em}
The unprecedented demand for high data rates imposes challenges for fifth generation (5G) networks.
Millimeter wave (mmWave) communication has been viewed as a promising candidate technology to address the challenge of bandwidth shortage \cite{Pi11Mag}, due to the large bandwidths in the mmWave spectrum. Particularly, advances in mmWave hardware  and the potential availability of spectrum have encouraged the wireless industry to
consider mmWave for the access link in outdoor cellular systems. Different from the propagation characteristics in the sub-6GHz wireless communication, the propagation in the mmWave band is highly directional with severe propagation path loss, low penetration coefficients and high signal attenuation \cite{Deng15ICCW,Rappaport12ICC}. 
 To compensate the large path loss in the mmWave band,  directional beamforming provides an effective solution to resist the large path loss as well as to provide sufficient received signal power  \cite{Alkhateeb14Mag}.

Non-orthogonal multiple access (NOMA) in power domain provides a superior spectral efficiency and hence has recently received significant attentions  \cite{Saito13VTC}.  The key idea of NOMA is to multiplex multiple users on different power levels for multiple access within a given resource block (e.g., time/frequency). Moreover, it particularly invokes  successive interference cancellation (SIC) techniques at receivers who have better channel conditions for removing intra-channel  interference. As a result, NOMA is capable of supporting massive connectivity and efficiently meeting the users' diverse quality of service (QoS) requirements \cite{Ding17Mag}. 

Sparked by the aforementioned characteratics of mmWave communication and NOMA, the use of NOMA in mmWave sprectrum is highly desired due to the following advantages:
 \begin{itemize}
  \item The highly directional transmission in mmWave spretrum implies that the users' channels  can be severely correlated, which are suitable for applying NOMA technique.  
  \item Directional beams in mmWave communication with large-scale arrays bring large antenna array gains and small inter-beam interference, enabling NOMA transmission over each beam.
  \item Applying NOMA into mmWave communication is capable of enhancing the spectral efficiency, which provides a  new highly rewarding candidate for 5G networks.
 \end{itemize}



\vspace{-0.3cm}
\subsection{Prior Works}
\vspace{-0.3cm}

\subsubsection{Studies on mmWave systems} 
In constrast to the conventional low frequency multiple-input multiple-output (MIMO) systems, the additional radio frequency (RF) hardware constraints such as high-resolution analog-to-digital converters (ADCs) exist in mmWave systems. Hence, fully digital baseband beamforming becomes impossible \cite{Ayach14TWC}.  Considering the high power consumption of mixed signal components in mmWave system,  the hybrid analog and digital beamforming was proposed in \cite{Hur13TC}. Since analog beamforming is implemented via analog phase shifters, the modulus of the elements in the analog beamforming vectors are constrained to a constant. The hybrid analog and digital beamforming for  mmWave systems was studied in 
\cite{Ayach14TWC,Hur13TC,Yu16JSTSP,Sohrabi16JSAC}, where the designs of the beamforming matrices are in general based on perfect channel state information (CSI).  
Unfortunately, in practice, accurate channel estimation and CSI feedback to the base station (BS) are difficult \cite{Alkhateeb15TWC,Lee16JSTSP}, which induce heavy system overhead particularly in  multi-user mmWave  downlink systems.  To reduce the feedback overhead, a two-stage hybrid analog and digital beamforming approach was proposed in  \cite{Alkhateeb15TWC}, where the analog beamforming designs at the BS and the users are constructed for maximizing its own desirable signal based on individual CSI. 
 In addition, random beamforming provides an effective approach in reducing the CSI feedback \cite{Tse05wireless}.  The performance of random bemforming in conventional mmWave systems was investigated in \cite{Lee16JSTSP}. It was shown that random beamfoming in mmWave channels is indeed capable of achieving a very good sum rate performance with the aid of user scheduling and power allocation strategies. 


\subsubsection{Studies on single-input single-output (SISO)-NOMA systems}
Early research contributions have studied the potential implementation of NOMA in cognitive radio (CR) networks \cite{Ding16TVT,Liu17TWC}  and simultaneous wireless information and power transfer (SWIPT) protocol  \cite{Liu16JSAC}. More particularly, in \cite{Ding16TVT}, the impact of user paring on the transmission sum rate was investigated both for fixed power allocation NOMA systems and CR inspired NOMA.  As the interplay between NOMA and CR is bi-directional, the application of NOMA in large-scale CR networks was exploited in \cite{Liu17TWC} with carefully considering the channel ordering issue.  Aiming at addressing energy related issues, in \cite{Liu16JSAC}, a novel comparative NOMA scheme was proposed by  invoking simultaneous wireless information and power transfer (SWIPT) technique. 
Regarding the resource allocation works in NOMA, a joint subcarrier and power allocation algorithm was developed in \cite{LeiYHS16}, where a near optimal solution was developed based on Lagrangian duality and dynamic programming. In \cite{Fang16ITC}, a low-complexity suboptimal algorithm based on matching theory was developed to maximize energy efficy for multi-subcarrier (MC)-NOMA systems.  The authors of \cite{DiTWC16}   
proposed  an asymptotically optimal joint power allocation and user scheduling algorithm based on matching theory to maximize the   sum rate of  MC-NOMA systems.  Furthermore, in \cite{Sun17TC}, an effective power allocation and user scheduling algorithm based on monotonic optimization theory was proposed for full-duplex MC-NOMA systems. Driven by the partial CSI feedback, a power allocation strategy for SISO-NOMA systems based on the average CSI was developed in \cite{Cui16SPL}. 

\subsubsection{Studies on multiple-input multiple-output (MIMO)-NOMA  systems}
   In \cite{choi2015minimum}, the author proposed a beamforming design approach to minimize transmission power  where a multi-antenna base station  (BS) communicates two single-antenna NOMA users in each beam. In \cite{Hanif16TSP}, a multi-antenna BS performs NOMA transmission with $K$ single-antenna users via designing the beamforming vectors, in which an effective channel gain constraint was formulated to guarantee users' fairness. Based on these studies,  the authors of \cite{Ding16TWC} proposed an general MIMO-NOMA designing framework, where  users were firstly grouped into small-size clusters, and then the NOMA principle was employed for each cluster.   
Furthermore, in  \cite{Liu16CL}, a user clustering and power allocation scheme was proposed to optimize the user fairness of MIMO-NOMA systems. 
be two. 




\vspace{-0.3cm}
 \subsection{Motivation and Contributions}
 \vspace{-0.3cm}
While the aforementioned research contributions have laid a solid foundation on mmWave and  NOMA systems, the investigations on the applications of NOMA on mmWave band are still in their fancy. Moreover, whether NOMA technique is capable of bringing perfromance enhancement for mmWave networks are still unknown. 
In this paper, we study the mmWave NOMA system, where the BS generates some separable beams and then NOMA transmission is applied on each beam.  
It is worth pointing out that the characteristics  of mmWave propagation  makes it impossible that applies the  digital beamforming which was invoked in the conventional sub-6GHz MIMO-NOMA works. 
In order to reduce the feedback overhead, the work of \cite{Ding17Ac} studied the co-existence of NOMA and mmWave systems with random beamforming, which  showed that the performance of the mmWave NOMA systems  outperforms conventional mmWave-OMA systems.  
The advantage of random beamforming applied in  mmWave NOMA systems is that only equivalent channel gains of all users are required at the BS. In an effort for improving the performance of random beamforming, an efficient user scheduling method is required. 
 Moreover, power allocation strategies among inter/intra-beams are capable of further enhancing the performance of  mmWave-NOMA systems. In addition,the existence of inter-beam interference  in mmWave NOMA systems,  makes the  user scheduling and the power allocation become  more challenging and fundamentally different from the existed works for MC-NOMA systems \cite{LeiYHS16,DiTWC16,Sun17TC}. 

Driven by solving all the aforementioned issues, in this paper, we investigate the mmWave systems with adopting NOMA techniques under partial CSI feedback. 
More specifically, the BS first generates a set of random beams,  then each user feedback its scale channel gain to the BS, which avoids the cumbersome system overhead on the feedback of channel vectors. By doing so, the idealized perfect CSI assumption adopted in aforementioned MIMO-NOMA works [22-25] are relaxed.   Based on these channel information, the BS 
schedules multiple users  on each predefined beam, and then transmits the superposed signals based on NOMA with allocating appropriate power for each beam as well as users.  To the best of our knowledge, this is the first work to jointly consider  user scheduling and power allocation strategies in mmWave NOMA systems.
  Our main contributions are summarized as follows.

\begin{enumerate}
\item We propose a general downlink mmWave NOMA systems with the aid of random beamforming, in which the BS  requires the scale channel gains of all users rather than to obtain all channel vectors of users.   Then, we formulate  the  sum rate  maximization problem  subject to the users' QoS requirements by designing  the user scheduling and power allocation strategy. We mathematically proved that the formulated problem is non-deterministic polynomial-time (NP)-hard.

\item We decompose the original non-convex problem into two  subproblems as  \emph{user scheduling} and  \emph{power allocation}. By leveraging the branch and bound (BB) approach, we propose a global optimal solution for the power allocation.


\item  We  develop a low complexity solution with the aid of \emph{matching theory} and \emph{successive convex approximation (SCA)}.
Firstly, based on the concept of stable matching, we propose a low complexity suboptimal algorithm.  Secondly, we propose an efficient SCA  algorithm for providing a high-quality power allocation solutions. The properties  of the matching and SCA algorithms are analyzed.

\item  We  demonstrate that the proposed mmWave NOMA framework outperforms the conventional mmWave OMA framework with the aid of both of the proposed algorithms. 
Moreover, the proposed low complexity solution are capable of  achieving a near-optimal performance.

\end{enumerate}
  
 \vspace{-0.3cm}
\subsection{Organization}
\vspace{-0.3cm}
The rest of the paper is organized as follows. In Section II, the system model for studying mmWave NOMA and the random beamforing scheme are presented. The joint user scheduling and power allocation problem are formulated in Section III.  In Section IV, a global optimal solution based on BB is provided and a low complexity  power allocation and user scheduling  algorithm are developed in Section V. Simulation results are presented in Section VI, which is followed by conclusions in Section VII. 
 
 
 \vspace{-0.5cm} 
 \section{System model}
 \vspace{-0.3cm}
\subsection{Signal Model}\label{SigMod}
\vspace{-0.3cm}
Consider an mmWave-NOMA downlink scenario composed of  one BS  with  $N_{\mathrm{RF}}$ transmit antennas and $K$  single antenna users.  Assuming that the BS performs MIMO transmission with $M$ beams, $K\geq 2M$.  Denote by $\mathcal{M}=\{1,\cdots,M\}$ and $\mathcal{K} = \{ 1, \cdots,K\}$ be the beam set and the user set, respectively.
The $m$-th transmit beamforming vector is denoted as $\mathbf{w}_m \in \mathcal{C}^{M \times 1}$. 
  We assume that the multiuser scheduler schedules $q_m$ users denoted by $\mathcal{C}_m$ on  the $m$-th beam to perform NOMA and $\mathcal{C} =  \bigcup_{m\in \mathcal{M}}\mathcal{C}_m$ is the set of the total scheduled  users. We further assume that each user is scheduled by a single beam at most; thus, $\mathcal{C}_m \bigcap \mathcal{C}_n =\emptyset$, $n \neq m$. Let $ c_{k}^m$ indicate the indicators for user $k$ on  the $m$-th beam, $c_{k}^m \in \{0,1\}$. If $c_{k}^m =1$,  it indicates user $k$ is  scheduled on beam $m$ and $c_{k}^m =0$ if otherwise.  Let $s_{k}$ denote the data symbol transmitted for user $k$ and $\beta_{k}^m$  be the transmission power assigned for user $k$ on the $m$-th beam. We define  $M_t = |\mathcal{C}|$ to denote the total number of the scheduled  users. The total transmission power satisfies $ \sum_{m=1}^M \sum_{k=1}^{K}c^m_k\beta_{k}^m \leq P_{tot}$, where $P_{tot}$ is the maximum transmission power of the BS.
\begin{figure} [t!]
\centering
\includegraphics[width= 3.5in, height=2.2in]{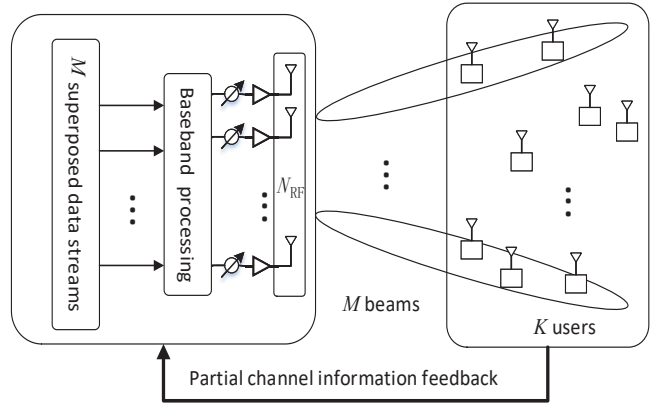}
\vspace*{-1.5em} \caption{System model for mmWave-NOMA transmission in downlink MISO scenario. }
 \label{system_model}
 \vspace{-1.5em}
\end{figure}

In the proposed mmWave-NOMA system, the BS chooses $M_t$ users among the $K$ users in the cell and broadcasts $M$ independent superposed data streams to the $M_t$ selected users with beamforming matrix $\mathbf{W}=\{\mathbf{w}_1,\cdots,\mathbf{w}_M\}$. 
Assuming user $k$ is scheduled at the $m$-th beam, the received signal at user $k$ is 
{\setlength\abovedisplayskip{3pt} 
\setlength\belowdisplayskip{3pt}
{\small
\begin{eqnarray}
\begin{aligned}
y_{k}^m = &\begin{matrix} \underbrace{\mathbf{h}_{k}^{mH}\mathbf{w}_{m}c_k^m\sqrt{\beta_{k}^m} s_{k}}\\  \text{ \small{\bf Desired signal}}\end{matrix}+ \begin{matrix} \underbrace{\mathbf{h}_{k}^{mH}\mathbf{w}_{m}\begin{matrix} \sum_{j\neq k} c_j^m\sqrt{\beta_{j}^m} s_{j} \end{matrix}  } \\ \small{\text{ \bf Intra-beam interference}} \end{matrix}   + \begin{matrix}\underbrace{ \begin{matrix} \sum_{n \neq m} \sum_{i\in \mathcal{K}}  \mathbf{h}_{k}^{mH} \mathbf{w}_{n} c_i^n \sqrt{\beta_{i}^n} s_{i} \end{matrix} }\\\small{\text{ \bf Inter-beam interference}} \end{matrix} + \begin{matrix} \underbrace{\nu_{k}}\\ \small{\text{ \bf Noise}} \end{matrix},
\end{aligned} 
\end{eqnarray}}}with $j,k \in \mathcal{K}$ and $m,n\in \mathcal{M}$, where $\mathbf{h}_{k}^m \in \mathbb{C}^{N_{\mathrm{RF}} \times 1}$ be the mmWave channel between the BS and  user $k$ and  $\nu_k \sim \mathcal{CN} (0,\sigma^2)$ is the additive white Gaussian noise for user $k$. It is assumed that all users have the same noise power in this paper.

\subsection{Channel Model}
Different from the conventional low frequency channel, the mmWave channel  in general has limited scattering due to the high free-space path loss. Thus, we consider the geometric channel model which can embody the low rank and spatial correlation characteristics of mmWave communications \cite{Ayach14TWC,Alkhateeb15TWC}. Using this model the channel from the BS to user $k$ can be modelled as
{\setlength\abovedisplayskip{3pt} 
\setlength\belowdisplayskip{3pt}
{\small
\begin{align}\label{Chmodl.eq}
\mathbf{h}_{k} =\sqrt{M \rho_k}\sum_{l=1}^L a_{k,l}\mathbf{a}_{BS}(\theta_{k,l}) ,
\end{align}}}where $\rho_k$ denotes the average path-loss between the BS and user $k$. In a mmWave propagation model, $\rho_k$ is  given by $\rho_k = \eta d_k^{-\alpha}$, where $\eta = \big(\frac{c}{4 \pi f_c}\big)^2$ is the frequency independent constant with $c= 3\times 10^8 m/s$ and the carrier frequency $f_c$. Thus, the valuses of $\eta$ are different for different mmWave frequencies.   $d_k$ is the distance between the BS and user $k$ and $\alpha$ is the path loss exponent depending on the line-of-sight (LoS) and non-line-of-sight (NLoS) links, i.e., $\alpha = \alpha_{LoS}$ for LoS link and $\alpha = \alpha_{NLoS}$ for NLoS link. In this paper, we assume that $l=1$ is the LoS link. Furthermore, $a_{k,l}$ is the complex gain of the $l$-th path with $a_{k,l} \sim \mathcal{CN}(0,1)$. $\theta_{k,l}$ denotes the $l$-th path's normalized direction to the physical angle of departure $\phi_{k,l} $ with $\phi_{k,l} \in [0,2\pi]$ and $\theta_{k,l} = \frac{2d\sin(\phi_{k,l})}{\lambda}$, where $\lambda$ is the signal wavelength, and $d$ is the distance between antenna elements. At last, $\mathbf{a}_{BS}^H(\theta_{k,l})$ is the antenna array response vectors of the BS. In this paper, we consider a uniform linear array (ULA), where $\mathbf{a}_{BS}^H(\theta_{k,l})$ can be defined as
{\setlength\abovedisplayskip{3pt} 
\setlength\belowdisplayskip{3pt}
{\small
\begin{align}\label{ArrayResCh.eq}
\mathbf{a}_{BS}(\phi_{k,l}) = \frac{1}{\sqrt{M}}\left[1,e^{j\pi \theta_{k,l}},\cdots, e^{j(M-1)\pi \theta_{k,l}} \right]^T.
\end{align}}}
\noindent

\vspace{-1cm}
\subsection{Analog Beamforming}

Due to the high cost and power consumption for hardware constraints,  a low-complexity analog beamforming is adopted in this paper. Specifically, we consider the random beambeamforming scheme to reduce the feedback overhead, where the direction of each analog beamforming vector is predefined. Suppose that the BS will form $M$ orthogonal beams for NOMA transmission. These beams  are predefined and are known to the BS and the users prior to transmission. Following  \cite{Lee16JSTSP}, these $M$ orthogonal beamforming vectors can be constructed by
{\setlength\abovedisplayskip{3pt} 
\setlength\belowdisplayskip{3pt}
{\small
\begin{align}
\mathbf{w}_m = \mathbf{a}\Big( \zeta + \frac{2(m-1)}{M} \Big),
\end{align}}}where $\zeta$ denotes a random variable following a uniform distribution with $\zeta \in [-1, 1]$. 

Assuming each user computes $M$ equivalent channel gain and feedbacks the magnitudes $\{g_{k}^m = |\mathbf{h}_{k}^H\mathbf{w}_m|^2,~m\in \mathcal{M}\}$ and the corresponding beam indices to the BS. With this information, the BS performs user scheduling and power allocation, which will be discussed in the following sections.

 \vspace{-0.5cm} 
\section{Problem formulation}
 \vspace{-0.3cm} 

Since multiple users are supported on each beam, based on the principle of NOMA, each user tries to employ SIC in a successive order to remove the intra-beam interference. Hence the decoding order is an essential issue for the  mmWave-NOMA systems. Let $\pi_m(k)$ be the decoding order for user $k$ on beam $m$,  namely, if $\pi_m(k) = i$, then user $k$  scheduled on beam $m$ is the $i$-th signal to be decoded. 
For any two users $j$ and $k$ scheduled on beam $m$ satisfying $\pi_m(j) \leq \pi_m(k)$, the signal-to-interference-
plus-noise ratio (SINR) of user $k$  to decode user $j$ is given by
{\setlength\abovedisplayskip{3pt} 
\setlength\belowdisplayskip{3pt}
{\small
\begin{eqnarray}
\begin{aligned}
\mathrm{SINR}_{j\rightarrow k}^m=\frac{c_j^m g_{k}^m \beta_{j}^m }{ g_{k}^m \sum\limits_{\pi_m(i)>\pi_m(j)} c_i^m \beta_{i}^m +\sum\limits_{n \neq m} g_{k}^n \beta^n + \sigma^2},
\end{aligned} 
\end{eqnarray}}}with $i,j,k \in \mathcal{C}_m$ and $m \in \mathcal{M}$, where $\beta^n = \sum_{i=1}^K c_i^n\beta_i^n$ is the transmission power on beam $n$. The corresponding decoding rate is $R_{j \rightarrow k}^m = \log_2(1+\mathrm{SINR}_{j\rightarrow k}^m)$. 
The achievable SINR for user $j$ on beam $m$  can be expressed as
{\setlength\abovedisplayskip{3pt} 
\setlength\belowdisplayskip{3pt}
{\small
\begin{eqnarray}\label{Rjj.eq}
\begin{aligned}
\mathrm{SINR}_{j\rightarrow j}^m = \frac{c_j^m g_{j}^m \beta_{j}^m }{ g_{j}^m \sum\limits_{\pi_m(i)>\pi_m(j)} c_i^m\beta_{i}^m +\sum\limits_{n \neq m} g_{j}^n \beta^n + \sigma^2},
\end{aligned} 
\end{eqnarray}}}with  $i,j \in \mathcal{C}_m$ and $m \in \mathcal{M}$. The correponding rate is $R_{j\rightarrow j }^m = \log_2(1 + \mathrm{SINR}_{j\rightarrow j}^m)$. Under the assumption of a given decoding order, to guarantee SIC performed successfully, the condition $R_{j\rightarrow k}^m \geq R_{j\rightarrow j}^m$ for $\pi_m(k)\geq \pi_m(j), ~j,k\in \mathcal{C}_m$ should be kept. For example, we assume that two users on beam $m$. Given the decoding $\pi_m(j)=j,~j=1,2$,   the SIC decoding condition at user $2$  can be expressed as
{\setlength\abovedisplayskip{3pt} 
\setlength\belowdisplayskip{3pt}
{\small
\begin{subequations}
\begin{align}
R_{1 \rightarrow 2}^m  \geq  R_{1 \rightarrow 1}^m, 
\end{align}
\end{subequations}}}\par When three users are allowed on beam $m$, the SIC decoding condition at user $2$  and user $3$ under decoding order $\pi_m(j)=j,~j=1,2,3$ can be given by
{\setlength\abovedisplayskip{3pt} 
\setlength\belowdisplayskip{3pt}
{\small
\begin{subequations}
\begin{align}
\begin{cases}
R_{1 \rightarrow 2}^m  \geq  R_{1 \rightarrow 1}^m, \\
R_{1 \rightarrow 3}^m  \geq  R_{1 \rightarrow 1}^m,\\
R_{2 \rightarrow 3}^m  \geq  R_{2 \rightarrow 2}^m.
\end{cases}
\end{align}
\end{subequations}}}\par It is easy to know that there will be $2^{q_m-1}-1=\sum_{k=1}^{q_m-1}\tbinom{k}{2}$ constraints when $q_m$ users are multiplexed on a single beam. 

The goal of the paper is to maximize the sum rate  subject to the total power constraint, the QoS constraints for each scheduled user and the optimal  decoding order by scheduling $M_t$ users from the $K$ users.   It can be formulated as follows.
{\setlength\abovedisplayskip{3pt} 
\setlength\belowdisplayskip{3pt}
{\small
\begin{subequations}\label{opt1.eq}
\begin{align}
 \max_{c,\beta,\pi} \quad &  \sum_{m=1}^M\sum_{j=1}^{K} R_{j \rightarrow j}^m \\
\mathrm{s.t.}\quad   &R_{j \rightarrow k}^m \geq R_{j \rightarrow j}^m,~\pi_m(k)>\pi_m(j),\label{C1.opt1}\\
& \sum_{m=1}^M \sum_{j\in \mathcal{C}_m} \beta_{j}^m \leq P_{tot},\label{C2.opt1}\\
&R_{j \rightarrow j}^m \geq \bar{R}_{j}, \label{C3.opt1} \\
& \sum_{k=1}^K c_{k}^m = q_m,\label{C4.opt1}\\
&\sum_{m=1}^M c_{k}^m \leq 1\label{C5.opt1}\\
&\pi_m \in \Pi,~j, k \in \mathcal{K},~m \in \mathcal{M}.\label{C6.opt1}
\end{align}
\end{subequations}}}where $c = \{c_k^m| k\in \mathcal{K},m\in \mathcal{M}\},\beta = \{\beta_k^m| k\in \mathcal{K},m\in \mathcal{M} \}$ and $\pi = \{\pi_m(k), k\in \mathcal{K},m\in \mathcal{M}\}$ denote the optimization variable sets of the  users, the power allocation coefficients and the decoding order, respectively.  Furthermore, $\Pi$ denotes the set of all possible SIC decoding orders.  Constraint \eqref{C1.opt1} guarantees the optimal decoding order which ensure that the SIC can be performed successfully \cite{Ding16TVT} and constraint \eqref{C2.opt1} is the total tansmission power constraint. Constraint  \eqref{C3.opt1} guarantees the QoS for  user $\pi_m(j)$ \cite{Choi16TWC}. Due to the constraint on the detection complexity of SIC receiver, we assume that   each beam can be shared by $q_m$   users, $q_m \geq 2$, in constraint \eqref{C4.opt1}.  Constraint \eqref{C5.opt1} indicates that each user can occupy one beam at maximum.

\begin{theorem} \label{Pro1}
Problem \eqref{opt1.eq} is a NP hard problem. More specifically,  problem \eqref{opt1.eq} is NP hard even only consider the  power allocation or user scheduling. 
\end{theorem}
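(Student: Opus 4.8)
The plan is to establish NP-hardness by exhibiting polynomial-time reductions from two classical NP-hard problems, treating the power-allocation-only and scheduling-only versions separately. The first observation is that fixing the scheduling variables $c$ and the decoding order $\pi$ to any feasible configuration turns \eqref{opt1.eq} into a pure power-allocation instance, whereas fixing $\beta$ and $\pi$ turns it into a pure user-scheduling instance; each restricted problem is a special case of \eqref{opt1.eq} (constraints \eqref{C4.opt1}--\eqref{C5.opt1} are automatically met by the frozen $c$ in the first case, and \eqref{C1.opt1}--\eqref{C3.opt1} by the frozen $\beta,\pi$ in the second). Hence proving that either restricted problem is NP-hard immediately yields NP-hardness of the full joint problem.

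For the power-allocation-only version I would reduce from the sum-rate maximization problem over the $M$-user Gaussian interference channel, which is known to be NP-hard. Given feasible $c$ and $\pi$, the objective $\sum_{m}\sum_{j} R_{j\to j}^m$ is a sum of logarithms whose arguments are coupled across beams only through the inter-beam interference term $\sum_{n\neq m} g_j^n \beta^n$ in \eqref{Rjj.eq}, which is exactly the coupling structure of an interference channel. Concretely, for a given interference-channel instance I would build a mmWave-NOMA instance with $M$ beams in which a single ``active'' user $j(m)$ per beam carries the power and the remaining $q_m-1$ users mandated by \eqref{C4.opt1} are made inert by assigning them zero channel gain and QoS threshold $\bar{R}_j=0$ in \eqref{C3.opt1}; then the intra-beam interference vanishes and an optimal allocation wastes no power on them. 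Choosing the fed-back gains $g_{j(m)}^n$ --- which the formulation takes as arbitrary nonnegative inputs --- to equal the direct and cross gains of the interference channel, and letting $P_{tot}$ in \eqref{C2.opt1} play the role of the sum-power budget, the active-user rates reduce exactly to the interference-channel rates, so an optimal power allocation for the constructed instance solves the interference-channel problem.

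For the user-scheduling-only version I would fix $\beta$ and $\pi$ and reduce from the three-dimensional matching problem (equivalently, maximum-weight $3$-set packing), which is NP-complete. Constraints \eqref{C4.opt1} and \eqref{C5.opt1} require partitioning scheduled users into disjoint groups of prescribed size $q_m$ across beams, and since the NOMA sum rate of a beam is a non-separable function of the entire group placed on it, maximizing $\sum_m\sum_j R_{j\to j}^m$ over $c$ is a maximum-weight group-packing problem. For a given instance I would take $q_m=3$, associate one user with each ground element, create one beam per candidate triple, and tune the fixed powers, the gains $g_k^m$, and the thresholds in \eqref{C3.opt1} so that inter-beam interference is negligible (making each beam's rate depend only on the local group) and a beam attains a large sum rate precisely when its three assigned users form a valid triple, and is QoS-infeasible otherwise; the scheduling optimum then returns a maximum set of disjoint valid triples and thereby decides the matching instance.

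The main obstacle, common to both reductions, is to verify that the structured NOMA rate function can \emph{faithfully} encode the combinatorial weights and feasibility pattern of the source problem while respecting the model-specific constraints --- in particular the requirement $q_m\ge 2$ in \eqref{C4.opt1}, the SIC ordering conditions \eqref{C1.opt1}, and the per-user QoS constraints \eqref{C3.opt1}. I would discharge this through the gain, power, and threshold assignments sketched above, and then confirm that each reduction is computable in polynomial time and that optimal solutions transfer back and forth, which are the routine remaining checks.
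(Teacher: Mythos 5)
Your overall strategy coincides with the paper's own proof: the paper likewise splits the argument into a power-allocation case, settled by invoking the known NP-hardness of sum-rate maximization over a Gaussian interference channel, and a scheduling case with fixed powers, settled by a reduction from three-dimensional matching. Your power-allocation reduction is sound, and in one respect more careful than the paper's: the paper simply sets $q_m=1$ (which contradicts its own assumption $q_m \geq 2$ in \eqref{C4.opt1}), whereas your inert-user padding with zero gains and zero QoS thresholds respects the quota constraint while leaving the interference-channel coupling through $\sum_{n\neq m} g_j^n \beta^n$ intact.

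The genuine gap is in your scheduling gadget. You create one beam per candidate triple with $q_m=3$, but \eqref{C4.opt1} is an \emph{equality}: every beam must receive exactly $q_m$ users, while \eqref{C5.opt1} lets each user serve at most one beam. Whenever the 3DM instance has overlapping triples --- the only interesting case --- or simply more than $K/3$ candidate triples, not all beams can be filled, so every instance you construct is infeasible outright and the reduction collapses; making invalid beams QoS-infeasible aggravates this rather than helping, since it forbids padding unfillable beams with leftover users. The paper avoids the issue by encoding the matching differently: it takes $q_m=2$ and makes the beam index itself the third coordinate, placing the triples in $\mathcal{M}\times\mathcal{K}_1\times\mathcal{K}_2$ for a bipartition $\mathcal{K}_1,\mathcal{K}_2$ of the users, so that disjointness of the selected triples is exactly what \eqref{C4.opt1}--\eqref{C5.opt1} enforce and no beam is structurally unfillable. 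Your gadget is repairable --- for instance, add three dedicated dummy users per triple-beam with gains tuned so each beam is filled either by its designated valid triple or by its own dummies at negligible rate --- but as written the step ``create one beam per candidate triple'' fails against the exact-quota constraint, and this repair is precisely the non-routine part you deferred to ``routine remaining checks.''
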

\begin{proof}
See Appendix A.
\end{proof}

Since problem \eqref{opt1.eq} is NP hard, which results in solving problem \eqref{opt1.eq} directly becomes intractable. In the following sections, we will develop a optimal solutions based on BB techniques; then, a low complexity algorithm based on matching theory and SCA techinique will be proposed by exploiting the properties of the optimization problem itself.




 \vspace{-0.5cm} 
\section{Global optimal solutions}
 \vspace{-0.3cm} 
In this section, we try to solve problem  \eqref{opt1.eq} optimally to obtain a  global solution as a baseline. However, optimization problem \eqref{opt1.eq} contains three multi-dimensional variables: two combinational variables-$c$ and $\pi$ and one continuous variable-$\beta$.  Considering the user scheduling and the decoding order are combinational integer variables, exhaustive search is a straightforward and basic method to find the optimal solution of  integer programming problems \cite{horst2013global}.
 Then for given the scheduled users and the corresponding decoding order, we develop an   optimal power allocation  strategy based on BB techniques \cite{boyd2004convex} in the following.

Specifically, for given a set of $c$ and $\pi$, the sum rate maximization  problem in \eqref{opt1.eq} can be simplifies as follows. For notation simplicity, let $j_m$ denote the $j$-th  decoded user index scheduled on beam $m$ in the following.  
{\setlength\abovedisplayskip{3pt} 
\setlength\belowdisplayskip{3pt} 
{\small
\begin{subequations}\label{optbb1.eq}
\begin{align}
\max_{\beta} \quad &  \sum_{m=1}^M\sum_{j_m=1}^{q_m}  R_{j_m\rightarrow j_m}^m \\
\mathrm{s.t.}\quad  & R_{j_m \rightarrow k_m}^m \geq R_{j_m \rightarrow j_m}^m,\label{C1.optbb1}\\
& \sum_{m=1}^M \sum_{j_m=1}^{q_m} \beta_{j_m}^m \leq P_{tot},\label{C2.optbb1}\\
&R_{j_m \rightarrow j_m}^m \geq \bar{R}_{j_m}, \label{C3.optbb1}\\
&k_m>j_m,~j_m, k_m \in \mathcal{C}_m,~m \in \mathcal{M},\label{C4.optbb1}
\end{align}
\end{subequations}}}which is a subproblem of the original optimization problem in \eqref{opt1.eq}, since the optimization of problem \eqref{optbb1.eq} only relates with the power allocation coefficients.
Note that the objective and the constraint \eqref{C3.optbb1} contains a difference of concave functions in $\beta$. These features make problem \eqref{optbb1.eq} is still NP-hard based on {\bf Theorem \ref{Pro1}}. 

Due to the total transmission power constraint and the QoS constraints for the scheduled users, problem \eqref{optbb1.eq} may not be always feasible for example when the channel condition of the scheduled user is extremely  poor, its QoS can not be guaranteed even to be allocated by the total  power. Before solve problem \eqref{optbb1.eq}, we check the feasibility of problem \eqref{optbb1.eq} first. 

\begin{proposition} \label{Pro:Fea}
The feasibility of optimization problem \eqref{optbb1.eq} can be  checked by solving the following convex problem:
{\setlength\abovedisplayskip{3pt} 
\setlength\belowdisplayskip{3pt}
{\small
\begin{eqnarray}\label{optbb0.eq}
\begin{aligned}
 P' = \min_{\beta}\quad &\sum_{m=1}^M \sum_{j_m\in \mathcal{C}_m} \beta_{j_m}^m \\
\mathrm{s.t.}\quad  & \eqref{C1.optbb1}  \And \eqref{C3.optbb1}\And \eqref{C4.optbb1}.
\end{aligned}
\end{eqnarray}}} 
\end{proposition}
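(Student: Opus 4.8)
The plan is to establish two things: that problem \eqref{optbb0.eq} is genuinely convex, and that problem \eqref{optbb1.eq} is feasible precisely when the optimal value $P'$ of \eqref{optbb0.eq} obeys $P' \leq P_{tot}$. Together these justify treating \eqref{optbb0.eq} as a feasibility test for \eqref{optbb1.eq}, since \eqref{optbb0.eq} differs from \eqref{optbb1.eq} only in that it drops the power budget \eqref{C2.optbb1} and instead minimizes the consumed power.

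The first and less obvious step is to recast the rate constraints \eqref{C1.optbb1} and \eqref{C3.optbb1} as affine inequalities in $\beta$, after which convexity is immediate. For the QoS constraint \eqref{C3.optbb1}, writing $\gamma_{j_m} = 2^{\bar R_{j_m}} - 1$ and clearing the strictly positive denominator in \eqref{Rjj.eq}, the requirement $\mathrm{SINR}_{j_m\to j_m}^m \geq \gamma_{j_m}$ becomes
\begin{align*}
g_{j_m}^m \beta_{j_m}^m \;\geq\; \gamma_{j_m}\Big( g_{j_m}^m \!\!\sum_{\pi_m(i)>\pi_m(j_m)}\!\! \beta_{i}^m + \sum_{n\neq m} g_{j_m}^n \beta^n + \sigma^2 \Big),
\end{align*}
which is affine in $\beta$ and, since $\gamma_{j_m}\sigma^2 > 0$, forces $\beta_{j_m}^m > 0$. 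For the SIC constraint \eqref{C1.optbb1}, I would use that $R_{j_m\to k_m}^m \geq R_{j_m\to j_m}^m$ is equivalent to $\mathrm{SINR}_{j_m\to k_m}^m \geq \mathrm{SINR}_{j_m\to j_m}^m$. Both SINRs carry the common numerator factor $\beta_{j_m}^m$; cancelling it (legitimate by the positivity just noted) and cross-multiplying the two positive denominators, the shared intra-beam interference term cancels, leaving
\begin{align*}
\sum_{n\neq m}\big(g_{k_m}^m g_{j_m}^n - g_{j_m}^m g_{k_m}^n\big)\beta^n + \big(g_{k_m}^m - g_{j_m}^m\big)\sigma^2 \;\geq\; 0,
\end{align*}
again affine in $\beta$ because $\beta^n = \sum_i c_i^n \beta_i^n$. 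Thus the feasible set of \eqref{optbb0.eq} is a polyhedron and its objective is linear, so \eqref{optbb0.eq} is in fact a linear program and hence convex.

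I would then prove the feasibility equivalence directly. For the forward direction, if \eqref{optbb1.eq} admits a feasible $\beta$, that point satisfies \eqref{C1.optbb1}, \eqref{C3.optbb1} and \eqref{C4.optbb1}, so it is feasible for \eqref{optbb0.eq}, and \eqref{C2.optbb1} gives $P' \leq \sum_m\sum_{j_m}\beta_{j_m}^m \leq P_{tot}$. Conversely, let $\beta^\star$ attain $P'$ in \eqref{optbb0.eq}; it satisfies \eqref{C1.optbb1}, \eqref{C3.optbb1}, \eqref{C4.optbb1} by construction, and if $P' \leq P_{tot}$ then $\beta^\star$ also meets the budget \eqref{C2.optbb1}, hence is feasible for \eqref{optbb1.eq}. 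Therefore \eqref{optbb1.eq} is feasible if and only if $P' \leq P_{tot}$, with the convention that infeasibility of \eqref{optbb0.eq} itself signals infeasibility of \eqref{optbb1.eq}.

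The main obstacle I anticipate is the reformulation of the SIC constraint \eqref{C1.optbb1}: as written it couples intra- and inter-beam powers nonlinearly through a ratio of two SINRs, and the key insight is that the common numerator and the intra-beam interference term cancel, collapsing it to the affine inequality above. The supporting subtlety is justifying the cancellation of $\beta_{j_m}^m$, which is exactly where the strict positivity implied by the QoS constraint \eqref{C3.optbb1} is invoked; the remaining feasibility bookkeeping is then routine.
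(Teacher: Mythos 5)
Your proof is correct and takes essentially the route the paper relies on: the paper gives no in-text argument (it defers to \cite{Zheng09ITC} for the standard minimum-power feasibility test, later remarking that infeasibility of \eqref{optbb0.eq} or $P' > P_{tot}$ rules out the given $c$ and $\pi$, which is exactly your equivalence), and your affine rewriting of the SIC constraint after cancelling $\beta_{j_m}^m$ and cross-multiplying is precisely the paper's own equivalent form \eqref{SICOrd2.eq}. Your two refinements beyond the paper --- justifying the cancellation of $\beta_{j_m}^m$ via the strict positivity forced by the QoS constraint \eqref{C3.optbb1}, and observing that the reformulated problem is in fact a linear program --- are sound and simply make explicit what the paper leaves to the citation.
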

The detailed proof of {\bf Proposition \ref{Pro:Fea}} can be referred as \cite{Zheng09ITC}. Note that problem \eqref{optbb0.eq} is a power minimization problem,  which can be solved directly via a standard optimization tool. 

For given $\mathbf{c}$ and $\pi$, if optimization  problem \eqref{optbb0.eq}
is infeasible or the optimal objective value $ P'> P_{tot}$, then the given $\mathbf{c}$ and $\pi$ can not be optimal. It implies we can not find a set of feasible power allocation coefficients under given $\mathbf{c}$ and $\pi$ satisfying the optimal decoding order. Hence,   the given $\mathbf{c}$ and $\pi$ is can not be optimal.

 \vspace{-0.5cm} 
\subsection{Problem Transformations for BB Algorithms}
 \vspace{-0.3cm} 
Though the  sum rate maximization problem in \eqref{optbb1.eq} is nonconvex, it is possible to find a optimal solution based on a BB technique \cite{Weeraddana11TSP}. 
The basic idea using BB  is to optimize the objective function over a multi-dimensional rectangle.

To elaborate further, we introduce a variable set $\{\Gamma_{j_m\rightarrow j_m}^m, ~j_m\in  \mathcal{C}_m\}$, where $\Gamma_{j_m \rightarrow j_m}^m$ denotes the achievable SINR for user $j_m$ under given decoding order $\pi$. Similar to \eqref{Rjj.eq}, it can be written as
{\setlength\abovedisplayskip{3pt} 
\setlength\belowdisplayskip{3pt}
{\small
\begin{align}\label{SINR.eq1}
\Gamma_{j_m \rightarrow j_m}^m =  \frac{g_{j_m}^m \beta_{j_m}^m }{ g_{j_m}^m \sum_{i_m=j_m+1}^{q_m} \beta_{i_m}^m +\sum_{n \neq m} g_{j_m}^n \beta^n + \sigma^2},
\end{align}}}with $i_m,j_m\in \mathcal{C}_m,~m \in \mathcal{M}$.

Furthermore,  to give some useful sights, we rearrange constraint \eqref{C1.optbb1} as 
{\setlength\abovedisplayskip{3pt} 
\setlength\belowdisplayskip{3pt}
{\small
\begin{eqnarray}\label{SICOrd2.eq}
\begin{aligned}
\sum_{n\neq m} \Big(g_{k_m}^m  g_{j_m}^n - g_{j_m}^m g_{k_m}^n\Big)\beta^n + \big(g_{k_m}^m - g_{j_m}^m \big) \sigma^2 \geq 0,
\end{aligned}
\end{eqnarray}}}which is equivalent expression for   \eqref{C1.opt1}.

Now problem \eqref{optbb1.eq} can be reformulated as a standard form for BB, which is given by 
{\setlength\abovedisplayskip{3pt} 
\setlength\belowdisplayskip{3pt}
{\small
\begin{subequations}\label{optbb2.eq}
\begin{align}
 \min_{\tilde{\beta},\Gamma}  &  -\sum_{m=1}^M\sum_{j_m=1}^{q_m}  \log_2\Big(1 + \Gamma_{j_m \rightarrow j_m}^m\Big) \\
\mathrm{s.t.} & \Gamma_{j_m \rightarrow j_m}^m \leq   \frac{g_{j_m}^m \beta_{j_m}^m }{ g_{j_m}^m \sum_{i_m=j_m+1}^{q_m} \beta_{i_m}^m +\sum_{n \neq m} g_{j_m}^n \beta^n + \sigma^2},\label{C1.optbb2}\\
& \eqref{C2.optbb1} \And  \eqref{C3.optbb1} \And  \eqref{C4.optbb1} \And \eqref{SICOrd2.eq}. \label{C2.optbb2}
\end{align}
\end{subequations}}}
\noindent
\begin{proposition}\label{Pro2}
Problem \eqref{optbb2.eq} is equivalent to problem \eqref{optbb1.eq}, hence the optimal solution to problem \eqref{optbb2.eq} is also optimal for problem \eqref{optbb1.eq}. \end{proposition}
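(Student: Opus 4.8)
The plan is to recognize \eqref{optbb2.eq} as the standard hypograph reformulation of \eqref{optbb1.eq}: one introduces the auxiliary variables $\Gamma_{j_m\to j_m}^m$ as lower bounds on the true SINRs through \eqref{C1.optbb2}, and then shows that these bounds are tight at every optimizer. If I can establish tightness, the two problems share the same $\beta$-optimizers and their optimal values are negatives of one another, which is exactly the claimed equivalence. Accordingly I would organize the argument in two parts: (i) verify that the feasible regions of \eqref{optbb1.eq} and \eqref{optbb2.eq}, projected onto the $\beta$-space, coincide; and (ii) prove that the inequalities \eqref{C1.optbb2} are active at optimality.

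For part (i), I would observe that every constraint of \eqref{optbb2.eq} other than \eqref{C1.optbb2}, namely \eqref{C2.optbb1}, \eqref{C3.optbb1}, \eqref{C4.optbb1} and \eqref{SICOrd2.eq}, involves $\beta$ alone. Moreover \eqref{SICOrd2.eq} is the equivalent rewriting of the SIC ordering constraint \eqref{C1.optbb1} already provided in the text: applying monotonicity of $\log_2$ reduces \eqref{C1.optbb1} to $\mathrm{SINR}_{j_m\to k_m}^m\ge\mathrm{SINR}_{j_m\to j_m}^m$, and cross-multiplying while cancelling the common intra-beam term $g_{k_m}^m g_{j_m}^m\sum_{i_m>j_m}\beta_{i_m}^m$ yields precisely \eqref{SICOrd2.eq}. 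Hence the $\beta$-projections of the two feasible sets are identical, and it only remains to relate the objectives through the variables $\Gamma$. Comparing \eqref{C1.optbb2} with the SINR expression \eqref{SINR.eq1} shows the bound reads simply $\Gamma_{j_m\to j_m}^m\le \mathrm{SINR}_{j_m\to j_m}^m(\beta)$.

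For part (ii), I would argue by contradiction. The minimized objective $-\sum_{m,j_m}\log_2(1+\Gamma_{j_m\to j_m}^m)$ is strictly decreasing in each $\Gamma_{j_m\to j_m}^m$, and $\Gamma$ occurs in no constraint except \eqref{C1.optbb2}. Thus if some \eqref{C1.optbb2} were slack at an optimizer, I could raise that single $\Gamma_{j_m\to j_m}^m$ toward its upper bound while holding $\beta$ fixed; this preserves feasibility (the remaining constraints do not see $\Gamma$) and strictly lowers the objective, contradicting optimality. Therefore every optimizer satisfies $\Gamma_{j_m\to j_m}^m=\mathrm{SINR}_{j_m\to j_m}^m(\beta)$, and substituting these equalities turns the objective of \eqref{optbb2.eq} into $-\sum_{m,j_m}R_{j_m\to j_m}^m$. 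Since the $\beta$-feasible sets agree, minimizing this over \eqref{optbb2.eq} is exactly maximizing the objective of \eqref{optbb1.eq} over the same set; the optimizers in $\beta$ coincide, the optimal values are negatives of each other, and the extra coordinates $\Gamma^\star$ are recovered from $\beta^\star$ by the saturated equalities.

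The argument is essentially a routine relaxation-tightness lemma, so no genuinely hard estimate arises. The only step warranting care is the saturation argument of part (ii), which rests entirely on the structural observation that $\Gamma$ is absent from \eqref{C2.optbb1}, \eqref{C3.optbb1}, \eqref{C4.optbb1} and \eqref{SICOrd2.eq}; this is the crux, while the SIC-constraint rewriting needed in part (i) is already supplied in the text.
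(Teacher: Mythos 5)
Your proposal is correct and takes essentially the same route as the paper: the paper likewise introduces the auxiliary SINR variables, relaxes the equalities \eqref{SINR.eq1} to the inequalities \eqref{SINR.eq2} (i.e., \eqref{C1.optbb2}), and concludes from the strict monotonicity of $\log_2(1+\cdot)$ that these inequalities are tight at any optimum, so the two problems share the same optimal solution. Your write-up only adds detail the paper leaves implicit, namely the explicit contradiction argument and the structural observation that $\Gamma$ enters the remaining constraints at most through lower bounds, so raising a slack $\Gamma_{j_m \rightarrow j_m}^m$ preserves feasibility.
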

 
 \begin{proof}
 With the introduce variable set $\{\Gamma_{j_m \rightarrow j_m}^m, ~j_m \in \mathcal{C}_m,~m\in \mathcal{M}\}$,   the objective in \eqref{optbb1.eq} can be expressed  as minimizing the following function
 {\setlength\abovedisplayskip{3pt} 
\setlength\belowdisplayskip{3pt}
{\small
 \begin{align}
-\sum_{m=1}^M\sum_{j_m=1}^{q_m}  \log_2\Big(1 + \Gamma_{j_m \rightarrow j_m}^m\Big).
 \end{align}}} with constraints \eqref{C1.optbb2} and \eqref{C2.optbb2}.
 We relax the equalities in \eqref{SINR.eq1} to be 
 {\setlength\abovedisplayskip{3pt} 
\setlength\belowdisplayskip{3pt}
{\small
\begin{align}\label{SINR.eq2}
\Gamma_{j_m\rightarrow j_m}^m \leq  \frac{g_{j_m}^m \beta_{j_m}^m }{ g_{j_m}^m \sum_{i_m=j_m+1}^{q_m} \beta_{i_m}^m +\sum_{n \neq m} g_{j_m}^n \beta^n + \sigma^2},
\end{align}}}\par Based on monotonic increasing feature of $\log(\cdot)$ function, \eqref{SINR.eq2} will be strict equality at optimum, which implies \eqref{optbb2.eq} and \eqref{optbb1.eq} have the same optimal solution.
 \end{proof}
 
  \vspace{-0.5cm} 
\subsection{Preliminaries for BB Algorithms} 
  \vspace{-0.3cm} 
In this subsection, we introduce the preliminary steps for BB algorithm. We start by transforming the constraint sets into a multi-dimensional box set. Then, we construct the bound function for each  multi-dimensional box set.  Finally, we propose a more effective algorithm to find the values of the bound functions.
 
 \subsubsection{Constructing box constraint sets}
 
 We first define the objective function in \eqref{optbb2.eq} and the feasible set for $\Gamma$ as $\mathcal{U}(\Gamma)$ and $\mathcal{G}$, respectively. 
 {\setlength\abovedisplayskip{3pt} 
\setlength\belowdisplayskip{3pt}
{\small
 \begin{align}
\mathcal{ U}(\Gamma) =& -\sum_{m=1}^M\sum_{j_m=1}^{q_m}  \log_2\Big(1 + \Gamma_{j_m \rightarrow j_m}^m\Big),\\
  \mathcal{G} =& \begin{Bmatrix}
 \Gamma | \eqref{C1.optbb2} \And \eqref{C2.optbb2}
\end{Bmatrix} .\label{FeSet.eq}
 \end{align}}}\par Note that it is true that the objective function $ \mathcal{U}(\Gamma) < 0$.  Therefore, the  optimization problem in \eqref{optbb2.eq} can be equivalently expressed as
 {\setlength\abovedisplayskip{3pt} 
\setlength\belowdisplayskip{3pt}
{\small
\begin{eqnarray}\label{opt5.eq}
\begin{aligned}
\min_{\Gamma} \quad &\mathcal{ U}(\Gamma)  \quad
\mathrm{s.t.}\quad & \Gamma  \in \mathcal{G}.
\end{aligned}
\end{eqnarray}}}\par Now the optimal objective value can be written as $p^\star = \inf_{\boldsymbol{\Gamma} \in \mathcal{G}} \mathcal{ U}(\boldsymbol{\Gamma})$. 
To formulate a standard form for  BB algorithm, let us define a new function as
{\setlength\abovedisplayskip{3pt} 
\setlength\belowdisplayskip{3pt}
{\small
\begin{align}\label{ExpdFun.eq}
\tilde{\mathcal{ U}}(\Gamma) = \begin{cases}
\mathcal{ U} (\Gamma) & \mathrm{if}~ \Gamma \in \mathcal{G} \\
0 & \mathrm{otherwise},
\end{cases}
\end{align}}}and note that for any set $\mathcal{S} \subseteq \mathbb{R}^{M_t}$,  we have 
{\setlength\abovedisplayskip{3pt} 
\setlength\belowdisplayskip{3pt}
{\small
\begin{align}\label{BoundFun.eq0}
\inf_{\Gamma \in \mathcal{S}} \tilde{\mathcal{ U}}(\Gamma) = \inf_{\Gamma \in \mathcal{G}}  \mathcal{ U} (\Gamma) = p^\star, 
\end{align}}}if $\mathcal{G} \subseteq \mathcal{S}$.
The first equality follows the fact that $ \mathcal{ U} (\Gamma)$  is a lower bound of $\tilde{\mathcal{ U}}(\Gamma)$ for $\Gamma \in  \mathcal{S}$.

Therefore, based on the feasible set in \eqref{FeSet.eq}, we can construct a $M_t$-dimensional rectangle $\mathcal{D}_0$ as
{\setlength\abovedisplayskip{3pt} 
\setlength\belowdisplayskip{3pt}
{\small
\begin{align}\label{D0.eq}
\mathcal{D}_0 = \begin{Bmatrix} 
\Gamma | \bar{\gamma}_{j_m}^m \leq \Gamma_{j_m \rightarrow j_m}^m \leq \overline{\Gamma}_{j_m \rightarrow j_m}^m, ~j_m \in \mathcal{C}_m,~m \in \mathcal{M}
\end{Bmatrix},
\end{align}}}which satisfies $\mathcal{G} \subseteq \mathcal{D}_0$.  
Here $\bar{\gamma}_{j_m}^m = 2^{\bar{R}_{j_m}}-1$ and $\overline{\Gamma}_{j_m \rightarrow j_m}^m$ is the upper bound of $\Gamma_{j_m \rightarrow j_m }^m$. It is easy to know that for each $\Gamma_{j_m \rightarrow j_m}^m$, it is upper bounded  by 
{\setlength\abovedisplayskip{3pt} 
\setlength\belowdisplayskip{3pt}
{\small
\begin{align}
\Gamma_{j_m \rightarrow j_m}^{m,m} \leq  \frac{ g_{j_m}^m P_{tot}}{ \sigma^2}.
\end{align}}}\par

 Note that for any $M_t$-dimensional rectangle $\mathcal{D} = \{\Gamma |   \underline{\Gamma}_{j_m \rightarrow j_m}^{m} \leq \Gamma_{j_m \rightarrow j_m}^{m} \leq \overline{\Gamma}_{j_m \rightarrow j_m}^m , ~ j_m \in \mathcal{C}_m,~m \in \mathcal{M}\} $ such that $\mathcal{D} \subseteq \mathcal{D}_0$. Based on the observation, we define a function $g(\boldsymbol{\Gamma}) $ as
 {\setlength\abovedisplayskip{3pt} 
\setlength\belowdisplayskip{3pt}
 {\small
\begin{align}\label{BoundFun.eq1}
g(\Gamma|\mathcal{D}) = \inf_{\Gamma \in \mathcal{D}} \tilde{\mathcal{ U}}(\Gamma).
\end{align}}}\par 
By combining \eqref{BoundFun.eq0} and \eqref{BoundFun.eq1}, one can obtain that 
{\setlength\abovedisplayskip{3pt} 
\setlength\belowdisplayskip{3pt}
 {\small
\begin{align}
g(\Gamma|\mathcal{D}_0) =  \inf_{\Gamma \in \mathcal{D}_0} \tilde{\mathcal{ U}}(\Gamma) = p^\star.
\end{align}}}\par

Through the above discussions, problem \eqref{optbb2.eq} has been converted into  a minimization of the non-convex function $\mathcal{ U}(\Gamma) $ over the a box constraint set $\mathcal{D}$. 
With BB algorithms, searching is organised by using a binary tree, where the initial box constraint set \eqref{FeSet.eq} will be subdivided iteratively into smaller subsets for searching. At each leaf node, we can obtain a lower bound and an upper bound for \eqref{optbb2.eq} by a bound function. Hence,  construction of the bound functions will be discussed in the following. 

\subsubsection{Construct  upper bound and lower bound function}

Based on the fact that $\tilde{\mathcal{ U}}$ in \eqref{ExpdFun.eq} is a non-decreasing function,  similar to \cite{balakrishnan1991branch,Weeraddana11TSP}, the lower bound function $\underline{g}$ and the  upper bound function $\overline{g}$ can be constructed as
{\setlength\abovedisplayskip{3pt} 
\setlength\belowdisplayskip{3pt}
 {\small
\begin{align}
\underline{g}(\Gamma) = \begin{cases} 
\mathcal{ U}(\overline{\Gamma}), & \underline{\Gamma} \in \mathcal{G} \\
0, & \mathrm{otherwise},
\end{cases}
\end{align}}}and 
{\setlength\abovedisplayskip{3pt} 
\setlength\belowdisplayskip{3pt}
 {\small
\begin{align}
\overline{g}(\Gamma) = \begin{cases} 
\mathcal{ U}(\underline{\Gamma}), & \underline{\Gamma} \in \mathcal{G} \\
0, & \mathrm{otherwise},
\end{cases}
\end{align}}}\par
Note that  to calculate $\underline{g}(\Gamma)$ and  $\overline{g}(\Gamma)$, the key step if to check if the condition $ \underline{\Gamma} \in \mathcal{G}$ is guaranteed.

Let $\{ \underline{\Gamma}_{j_m \rightarrow j_m}^m \}$ be a specified set of SINR values. Testing if these values are achievable is equivalent to solving the following feasibility problem
{\setlength\abovedisplayskip{3pt} 
\setlength\belowdisplayskip{3pt}
 {\small
\begin{eqnarray}\label{optbb3.eq}
\begin{aligned}
\mathrm{Find}  &\quad  \tilde{\beta}  \quad
\mathrm{s.t.}  &\quad \underline{\Gamma} \in \mathcal{G}.
\end{aligned}
\end{eqnarray}}}\par
Though problem \eqref{optbb3.eq} is a convex problem and can be solved directly, to further improve the computional efficiency, we develop a more efficient algorithm to check  $\underline{\Gamma} \in \mathcal{G}$ by exploiting the features of problem \eqref{optbb3.eq}.

\subsubsection{Solution for problem \eqref{optbb3.eq}}
We first consider constraint \eqref{C1.optbb2} for all $k\in \mathcal{K}$. Let $\boldsymbol{\Gamma} = [\Gamma_{1_1 \rightarrow 1_1}^1,\cdots,\Gamma_{q_1 \rightarrow q_1}^1,\cdots, \Gamma_{1_M \rightarrow 1_M}^M,\cdots,\Gamma_{q_M \rightarrow q_M}^M]^T$ with $\boldsymbol{\Gamma} \in \mathbb{R}^{M_t \times 1}$ and $\boldsymbol{\beta} = [\beta_{1}^1, \cdots,\beta_{q_1}^1,\cdots,\beta_{1_M}^M, \cdots,\beta_{q_M}^M]^T$ with $\boldsymbol{\beta} \in \mathbb{R}^{M_t \times 1}$. By rearranging \eqref{C1.optbb2} as
{\setlength\abovedisplayskip{3pt} 
\setlength\belowdisplayskip{3pt}
 {\small
\begin{eqnarray}\label{C1re.optbb2}
\begin{aligned}
\beta_{j_m}^m  -& \Gamma_{j_m \rightarrow j_m}^m \sum_{i_m=j_m+1}^{q_m} \beta_{i_m}^m - \frac{\Gamma_{j_m \rightarrow j_m}^m}{g_{j_m}^m}\sum_{n \neq m} g_{j_m}^n \beta^n  \geq \frac{\Gamma_{j_m \rightarrow j_m}^m}{g_{j_m}^m} \sigma^2,
\end{aligned} 
\end{eqnarray}}}\par 
Based on the transformation, \eqref{C1.optbb2} can be expressed as a compact form:
{\setlength\abovedisplayskip{3pt} 
\setlength\belowdisplayskip{3pt}
 {\small
\begin{align}\label{PerrFro.eq}
\big( \mathbf{I}_{M_t} - (\boldsymbol{\Lambda} + \mathbf{D}\mathbf{G})\big) \boldsymbol{\beta}  \succeq \sigma^2  \mathbf{D} \boldsymbol{1}_{M_t},
\end{align}}}where $\succeq$ or $\succ$ denotes the componentwise inequality between real matrix and vectors and 
{\setlength\abovedisplayskip{3pt} 
\setlength\belowdisplayskip{3pt}
 {\small
\begin{eqnarray*}\label{PerrFroMatrix.eq}
\begin{aligned}
&\boldsymbol{\Lambda} = \mathrm{diag}\left[\boldsymbol{\Lambda}_1, \cdots, \boldsymbol{\Lambda}_M\right],\\
&\mathbf{D}= \mathrm{diag}\begin{bmatrix}\frac{\Gamma_{1_1\rightarrow 1_1}^1}{g_{1_1}^1}, \cdots,  \frac{\Gamma_{q_1 \rightarrow q_1}^1}{g_{q_1}^1},\cdots, \frac{\Gamma_{1_M\rightarrow 1_M}^M}{g_{1_M}^M}, \cdots,  \frac{\Gamma_{q_M \rightarrow q_{M}}^M}{g_{q_{M}}^M}\end{bmatrix},\\
&\mathbf{G} = \begin{bmatrix} \begin{matrix}
\underbrace{\mathbf{G}_{-1},\cdots,\mathbf{G}_{-1}}\\q_1
\end{matrix}, \cdots,\begin{matrix}
\underbrace{\mathbf{G}_{-M},\cdots,\mathbf{G}_{-M}}\\q_M\end{matrix}
\end{bmatrix}^T,\\
&\mathbf{G}_{-m} = \begin{bmatrix}g_{1_m}^1\mathbf{1}_{q_1}^T, \cdots,g_{1_m}^{m-1}\mathbf{1}_{q_{m-1}}^T, \mathbf{0}_{q_m}^T,\cdots,g_{1_m}^{M}\mathbf{1}_{q_{M}}^T
\end{bmatrix},
\end{aligned}
\end{eqnarray*}}}where $\boldsymbol{\Lambda}_m$ is a upper triangular matrix with the $(j_m,k_m)$-th entry is $\Gamma_{j_m \rightarrow j_m}^m$ for $k_m>j_m$ and $\mathbf{G} \in _{M_t \times M_t}$.

\begin{lemma}\label{Lem1.irre}
 Let $\boldsymbol{\Lambda}$, $\mathbf{D}$ and $\mathbf{G}$ be given in \eqref{PerrFroMatrix.eq},  the following is satisfied:
 {\setlength\abovedisplayskip{3pt} 
\setlength\belowdisplayskip{3pt}
 {\small
\begin{eqnarray}
\boldsymbol{\Lambda}+ \mathbf{D}\mathbf{G} \succeq \mathbf{0},
\end{eqnarray}}}which implies that $\boldsymbol{\Lambda}+ \mathbf{D}\mathbf{G}$ is an irreducible nonnegative matrix.   
\end{lemma}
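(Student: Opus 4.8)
The plan is to split the claim into its two assertions and treat them in turn: first that every entry of $\boldsymbol{\Lambda}+\mathbf{D}\mathbf{G}$ is nonnegative, and second that the resulting nonnegative matrix is irreducible. The first assertion I would dispatch directly from the signs of the constituent quantities. All the SINR variables $\Gamma_{j_m\rightarrow j_m}^m$ are nonnegative (indeed strictly positive once the QoS lower bound $\Gamma_{j_m\rightarrow j_m}^m\geq\bar{\gamma}_{j_m}^m>0$ from \eqref{D0.eq} is enforced), and all equivalent channel gains $g_{k}^m=|\mathbf{h}_{k}^H\mathbf{w}_m|^2$ are nonnegative. Hence $\boldsymbol{\Lambda}$ (whose nonzero entries are the $\Gamma_{j_m\rightarrow j_m}^m$), the diagonal matrix $\mathbf{D}$ (whose entries are $\Gamma_{j_m\rightarrow j_m}^m/g_{j_m}^m$), and $\mathbf{G}$ (whose entries are the gains) are all entrywise nonnegative; a product $\mathbf{D}\mathbf{G}$ of nonnegative matrices is nonnegative, and so is the sum $\boldsymbol{\Lambda}+\mathbf{D}\mathbf{G}$, giving $\boldsymbol{\Lambda}+\mathbf{D}\mathbf{G}\succeq\mathbf{0}$.

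For irreducibility I would invoke the standard graph-theoretic characterization: a nonnegative square matrix $A$ is irreducible if and only if its associated directed graph $\mathcal{H}(A)$, carrying an edge $i\to\ell$ precisely when $A_{i\ell}>0$, is strongly connected. The task then reduces to identifying the support of $A=\boldsymbol{\Lambda}+\mathbf{D}\mathbf{G}$ and proving strong connectivity of $\mathcal{H}(A)$, which I would do by exploiting the block structure induced by the beam partition. The block-diagonal term $\boldsymbol{\Lambda}$ contributes only intra-beam edges, each $\boldsymbol{\Lambda}_m$ being strictly upper triangular with positive entries, i.e. forward edges along the decoding order within beam $m$. The decisive term is $\mathbf{D}\mathbf{G}$: since each row block $\mathbf{G}_{-m}$ carries a zero block $\mathbf{0}_{q_m}^T$ in its own beam slot and strictly positive gain blocks $g^n\mathbf{1}_{q_n}^T$ in every other slot $n\neq m$, and since $\mathbf{D}$ is a strictly positive diagonal matrix, the $(m,n)$ block of $\mathbf{D}\mathbf{G}$ is entrywise positive for every $n\neq m$ and zero for $n=m$. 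In graph terms this places a directed edge from every scheduled user on beam $m$ to every scheduled user on every other beam $n\neq m$.

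With this support in hand, strong connectivity follows in at most two hops, provided $M\geq 2$ (the multi-beam regime under study). For two nodes on distinct beams there is already a direct edge between them from the $\mathbf{D}\mathbf{G}$ term; for two distinct nodes $u,v$ on the same beam $m$, I would route $u\to w\to v$ through any node $w$ lying on another beam $n\neq m$, which exists because $M\geq 2$ and each off-beam block is strictly positive. Hence every ordered pair of vertices is joined by a directed path, $\mathcal{H}(A)$ is strongly connected, and $A$ is irreducible.

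The main obstacle is not the nonnegativity, which is routine, but pinning down the support of $\mathbf{D}\mathbf{G}$ carefully enough to guarantee that the inter-beam blocks are \emph{strictly} positive rather than merely nonnegative. This rests on two positivity facts that I would state explicitly: that $\mathbf{D}$ is strictly positive on its diagonal, requiring $\Gamma_{j_m\rightarrow j_m}^m>0$ (supplied by the QoS bound $\bar{\gamma}_{j_m}^m>0$), and that the cross-beam gains obey $g_{j_m}^n>0$, which holds almost surely under the random beamforming channel model. Were either to vanish, an entire row of $A$ could become zero and strong connectivity would break; I would therefore flag these conditions as the hypotheses that make the argument, and hence the lemma, go through.
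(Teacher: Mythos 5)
Your proof is correct and substantially more rigorous than the one in the paper, which disposes of the lemma in two lines: it observes that $\boldsymbol{\Lambda}$, $\mathbf{D}$, $\mathbf{G}$ are nonnegative, asserts that ``$\boldsymbol{\Lambda}$ is a diagonal matrix with positive entries'' (in fact $\boldsymbol{\Lambda}$ is block-diagonal with \emph{strictly upper triangular} blocks, so this description is off), and then cites Horn and Johnson to conclude irreducibility. As stated, that inference is not valid---nonnegativity plus a positive diagonal does not imply irreducibility (the identity matrix is the obvious counterexample)---so the paper's proof leaves the actual content of the lemma unargued. Your route supplies exactly that content: you invoke the graph-theoretic characterization (irreducible $\Leftrightarrow$ the digraph of the support is strongly connected), read off the support from the block structure in \eqref{PerrFroMatrix.eq} (intra-beam forward edges from $\boldsymbol{\Lambda}$; entrywise-positive off-beam blocks and zero diagonal blocks in $\mathbf{D}\mathbf{G}$), and close strong connectivity in at most two hops through a user on a different beam. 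A further merit of your argument is that it isolates the hypotheses the paper silently uses: $M\geq 2$ (for $M=1$ the matrix reduces to a strictly upper triangular $\boldsymbol{\Lambda}_1$ and is \emph{reducible}, so the lemma fails), strict positivity of the SINR variables (guaranteed by the QoS lower bounds $\bar{\gamma}_{j_m}^m>0$ in the box $\mathcal{D}_0$), and almost-sure positivity of the cross-beam gains $g_{j_m}^n$. The paper's citation-based shortcut buys brevity; your argument buys a proof that actually establishes the claim and makes its range of validity explicit.
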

\begin{proof}
Note that  $\boldsymbol{\Lambda}$, $\mathbf{D}$ and $\mathbf{G}$ are nonnegative matrices and $\boldsymbol{\Lambda}$ is a diagonal matrix with positive entries. Thus, $\boldsymbol{\Lambda}+ \mathbf{D}\mathbf{G}$ is irreducible nonnegative  \cite{horn2012matrix} and Lemma \ref{Lem1.irre} is proved. 
\end{proof}

Based on Lemma \ref{Lem1.irre}, the following theorem helps us to check if $\underline{\Gamma} \in \mathcal{G}$, where  $\rho(\boldsymbol{\Lambda} + \mathbf{D}\mathbf{G})$ denotes the Perron-Frobenius eigenvalue of matrix $\boldsymbol{\Lambda} + \mathbf{D}\mathbf{G}$. 
\begin{theorem}\label{Theo1}
 When problem \eqref{optbb0.eq} is feasible, for any $\underline{\Gamma} \geq \bar{\gamma}$, $\bar{\gamma} = \{\bar{\gamma}_{j_m}^m, \forall j, \forall m\}$, we check if $\underline{\Gamma} \in \mathcal{G}$ by the following conditions:
 \begin{itemize}
 \item[1)] $\rho(\boldsymbol{\Lambda} + \mathbf{D}\mathbf{G}) \geq 1 \Rightarrow \underline{\Gamma} \not\in \mathcal{G}$;
 
 \item[2)] $\rho(\boldsymbol{\Lambda} + \mathbf{D}\mathbf{G}) < 1 \Rightarrow \boldsymbol{\beta} = \big(\mathbf{I}_{M_t} - (\boldsymbol{\Lambda} + \mathbf{D}\mathbf{G}) \big)^{-1}\sigma^2  \mathbf{D} \boldsymbol{1}_{M_t}$. If $\sum_{m=1}^M \sum_{j_m=1}^{q_m} \beta_{j_m}^m > P_{tot}$, $\underline{\Gamma} \not\in \mathcal{G}$;
 
 \item[3)] When $\rho(\boldsymbol{\Lambda} + \mathbf{D}\mathbf{G}) < 1$ and $\sum_{m=1}^M \sum_{j_m=1}^{q_m} \beta_{j_m}^m \leq P_{tot}$,  if    $\boldsymbol{\beta}$ satisfies the constraints in \eqref{C1.optbb1} for all $k,j$ and $m$, $\boldsymbol{\beta}$ is the corresponding optimal solution; otherwise, the corresponding optimal power allocation coefficients can be obtained by solving problem \eqref{optbb3.eq}.
 \end{itemize} 
\end{theorem}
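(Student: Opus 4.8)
The plan is to recognize problem \eqref{optbb3.eq} as a standard multiuser power-control feasibility test and to resolve it with Perron--Frobenius theory, which is legitimate precisely because Lemma \ref{Lem1.irre} certifies that $\mathbf{A}:=\boldsymbol{\Lambda}+\mathbf{D}\mathbf{G}$ is nonnegative and irreducible. First I would observe that the stacked SINR-target inequalities \eqref{C1.optbb2} are exactly the componentwise linear system \eqref{PerrFro.eq}, namely $(\mathbf{I}_{M_t}-\mathbf{A})\boldsymbol{\beta}\succeq\mathbf{b}$ with $\mathbf{b}=\sigma^2\mathbf{D}\boldsymbol{1}_{M_t}\succ\mathbf{0}$. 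Hence deciding $\underline{\Gamma}\in\mathcal{G}$ reduces to deciding whether this system admits a $\boldsymbol{\beta}\succeq\mathbf{0}$ that simultaneously respects the power budget \eqref{C2.optbb1} and the SIC-order constraints \eqref{SICOrd2.eq}; the QoS constraint \eqref{C3.optbb1} is automatically met because we restrict to $\underline{\Gamma}\geq\bar{\gamma}$. For claim 1) I would argue by contradiction: irreducibility gives a strictly positive left Perron eigenvector $\mathbf{u}\succ\mathbf{0}$ with $\mathbf{u}^{T}\mathbf{A}=\rho(\mathbf{A})\,\mathbf{u}^{T}$, and left-multiplying any feasible $\boldsymbol{\beta}\succeq\mathbf{0}$ by $\mathbf{u}^{T}$ yields $(1-\rho(\mathbf{A}))\,\mathbf{u}^{T}\boldsymbol{\beta}\geq\mathbf{u}^{T}\mathbf{b}>0$; since a feasible $\boldsymbol{\beta}$ cannot be zero, $\mathbf{u}^{T}\boldsymbol{\beta}>0$, so $\rho(\mathbf{A})\geq 1$ forces the left side to be nonpositive, a contradiction, and therefore $\underline{\Gamma}\notin\mathcal{G}$.

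For claim 2), when $\rho(\mathbf{A})<1$ the Neumann series $\sum_{k\geq 0}\mathbf{A}^{k}$ converges, so $(\mathbf{I}_{M_t}-\mathbf{A})^{-1}$ exists and is entrywise nonnegative. I would then show that $\boldsymbol{\beta}^{\star}:=(\mathbf{I}_{M_t}-\mathbf{A})^{-1}\mathbf{b}$ is the componentwise-minimal solution of \eqref{PerrFro.eq}: applying the nonnegative matrix $(\mathbf{I}_{M_t}-\mathbf{A})^{-1}$ to $(\mathbf{I}_{M_t}-\mathbf{A})\boldsymbol{\beta}\succeq\mathbf{b}$ preserves the inequality and gives $\boldsymbol{\beta}\succeq\boldsymbol{\beta}^{\star}$ for every feasible $\boldsymbol{\beta}$. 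In particular $\boldsymbol{\beta}^{\star}$ also minimizes the aggregate power $\sum_{m}\sum_{j_m}\beta_{j_m}^{m}$, so if this minimum already exceeds $P_{tot}$ no power vector can meet \eqref{C1.optbb2} and \eqref{C2.optbb1} jointly, whence $\underline{\Gamma}\notin\mathcal{G}$.

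For claim 3), I would note that $\boldsymbol{\beta}^{\star}$ already satisfies \eqref{C1.optbb2} with equality, as well as \eqref{C2.optbb1} and \eqref{C3.optbb1}, so the only outstanding membership condition is the decoding-order constraint \eqref{SICOrd2.eq}. The structural observation driving this part is that \eqref{SICOrd2.eq} is a linear inequality in the per-beam aggregate powers $\{\beta^{n}\}$ alone and is entirely decoupled from the fixed-point system used above. Consequently, if the per-beam powers induced by $\boldsymbol{\beta}^{\star}$ verify \eqref{SICOrd2.eq} (equivalently \eqref{C1.optbb1}), then $\boldsymbol{\beta}^{\star}\in\mathcal{G}$ and, being componentwise minimal, it is the optimal allocation; otherwise minimality no longer certifies infeasibility, since raising some powers could restore \eqref{SICOrd2.eq} while keeping \eqref{C1.optbb2} intact, and one must therefore solve the convex feasibility problem \eqref{optbb3.eq} directly.

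I expect the main obstacle to be twofold: establishing cleanly that $\boldsymbol{\beta}^{\star}$ is at once the componentwise-minimal and the total-power-minimal feasible point (needed to turn a per-user statement into the scalar budget test of claim 2), and handling the logical gap in claim 3 where the self-contained Perron--Frobenius characterization breaks down. The latter is the genuinely delicate point, because the decoding-order coupling \eqref{SICOrd2.eq} is not captured by the interference matrix $\mathbf{A}$, so a violation of \eqref{SICOrd2.eq} at $\boldsymbol{\beta}^{\star}$ must be shown \emph{not} to imply $\underline{\Gamma}\notin\mathcal{G}$, which is exactly why the fallback to \eqref{optbb3.eq} is unavoidable rather than merely conservative.
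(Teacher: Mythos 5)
Your proposal is correct and reaches the same three conclusions via the same Perron--Frobenius frame as the paper's Appendix~B, but two of your sub-arguments are genuinely different and, in fact, cleaner. For condition 1), the paper simply invokes the standard characterization from \cite{horn2012matrix} that a positive solution to \eqref{PerrFro.eq} exists if and only if $\rho(\boldsymbol{\Lambda}+\mathbf{D}\mathbf{G})<1$; you instead give a self-contained contradiction using the strictly positive left Perron eigenvector guaranteed by the irreducibility in Lemma~\ref{Lem1.irre} --- more elementary, and it makes explicit where irreducibility is actually used (note that if $M=1$ the matrix is strictly upper triangular, hence reducible, but then $\rho=0<1$ and your argument degenerates harmlessly). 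For condition 2), the divergence is more substantial: the paper establishes minimality of $\boldsymbol{\beta}^{\star}=(\mathbf{I}_{M_t}-(\boldsymbol{\Lambda}+\mathbf{D}\mathbf{G}))^{-1}\sigma^2\mathbf{D}\boldsymbol{1}_{M_t}$ by posing the vector-valued program \eqref{OptPower.eq} and writing down KKT conditions, an argument that is shaky as printed (the objective is a vector, the stationarity equation $(\boldsymbol{\Lambda}+\mathbf{D}\mathbf{G})\boldsymbol{\lambda}=\mathbf{I}$ is dimensionally inconsistent, and the deduction $\boldsymbol{\lambda}=\mathbf{0}$ contradicts the stated $\boldsymbol{\lambda}\succ\mathbf{0}$); your monotonicity argument --- apply the entrywise-nonnegative Neumann-series inverse to $(\mathbf{I}_{M_t}-\mathbf{A})\boldsymbol{\beta}\succeq\mathbf{b}$ to get $\boldsymbol{\beta}\succeq\boldsymbol{\beta}^{\star}$ --- delivers componentwise minimality and hence total-power minimality in one stroke, rigorously, which is exactly what the scalar budget test against $P_{tot}$ requires. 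For condition 3) both proofs say essentially the same thing, though you articulate more carefully why a violation of \eqref{SICOrd2.eq} at the componentwise-minimal point does \emph{not} certify $\underline{\Gamma}\notin\mathcal{G}$ (the order constraints couple per-beam aggregates and are invisible to the interference matrix, so non-minimal allocations may restore them), which is the precise reason the fallback to solving \eqref{optbb3.eq} is unavoidable; the paper compresses this into one sentence. In short: same skeleton, but your left-eigenvector and Neumann-series lemmas replace the paper's citation and its flawed KKT step, yielding a tighter proof.
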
 
\begin{proof}
See Appendix B.
\end{proof}
Based on {\bf Theorem \ref{Theo1}}, we conclude the procedure of checking $\underline{\Gamma} \in \mathcal{G}$ in {\bf Algorithm \ref{alg:Check}}. 
\begin{algorithm}
\caption{Checking the condition for  $\underline{\Gamma} \in \mathcal{G}$}
\label{alg:Check}
\begin{algorithmic}[1]
\STATE {Construct matrices $\mathbf{\Lambda}$, $\mathbf{D}$ and $\mathbf{G}$ as in \eqref{PerrFroMatrix.eq}}
\STATE{ If $\rho(\boldsymbol{\Lambda} + \mathbf{D}\mathbf{G}) \geq 1$, $\underline{\Gamma} \not\in \mathcal{G}$ and STOP}
\STATE{ If $\rho(\boldsymbol{\Lambda} + \mathbf{D}\mathbf{G}) < 1$,  $\boldsymbol{\beta} = \big(\mathbf{I}_{M_t} - (\boldsymbol{\Lambda} + \mathbf{D}\mathbf{G}) \big)^{-1}\sigma^2  \mathbf{D} \boldsymbol{1}_{M_t}$.}
\IF{ $\sum_{m=1}^M \sum_{j_m=1}^{q_m} \beta_{j_m}^m > P_{tot}$}
\STATE{ $\underline{\Gamma} \not\in \mathcal{G}$ and STOP.}
\ELSE
\STATE{Solve problem \eqref{optbb3.eq} using standard convex tool. }
\STATE{ If \eqref{optbb3.eq} is feasible, then $\underline{\Gamma} \in \mathcal{G}$; Otherwise, $\underline{\Gamma} \not\in \mathcal{G}$. STOP.}
\ENDIF
\end{algorithmic}
\end{algorithm}

 \vspace{-0.9cm} 
\subsection{Proposed Optimal  User Scheduling and Power Allocation Algorithms}
 \vspace{-0.3cm} 


Based on the above discussions, the procedures of the proposed BB algorithm for optimal power allocation is described as follows. Let $\mathcal{D}_t = \{\mathcal{A}_1^1(t),\cdots,\cdots\mathcal{A}_{q_M}^M(t) \}$  denote the set of box subsets $\mathcal{A}_{j_m}^m(t)=\{ \underline{\Gamma}_{j_m \rightarrow j_m}^m(t) \leq\Gamma_{j_m \rightarrow j_m}^m \leq \overline{\Gamma}_{j_m \rightarrow j_m}^m(t) \}$ for all $j_m$ and $m$ at the $t$-th iteration. $\mathcal{D}(0)$ is the initial rectangular constraint set, which is defined in \eqref{D0.eq} on the root node of the binary tree. At the $t$-th iteration, we spilt $\mathcal{D}_t$ into two subsets  $\mathcal{B}_I$ and $\mathcal{B}_{II}$ along one of its longest edges, removing $\mathcal{D}(t)$ and adding the two new subsets to $\mathcal{R}(t)$. Next, we solve \eqref{optbb3.eq} based on {\bf Algorithm  \ref{alg:Check}} over each subset $\mathcal{B}_l, ~l\in \{ I, II\}$. A lower bound and a upper bound  can be obtained. Then, we choose the minimum over all upper bounds as $U(t)$ and choose the minimum over all lower bounds as $L(t)$,  i.e., taking the minimum over all the upper and lower bounds at each leaf node across all the levels in the binary tree. Removing the leaf node $\mathcal{D}$ such that $\underline{g}( \mathcal{D} ) \geq U(t)$, which will not affect the optimality of the BB tree. Repeat the above procedures until it satisfies the accuracy $\epsilon$ which is the difference between the global upper bound and the global lower bound. In the procedure of generating the BB tree, a sequence of subsets will be generated from $\mathcal{D}(0)$. The details are given in {\bf Algorithm \ref{alg:BB}} that captures the global optimal solution of \eqref{opt1.eq}.

\begin{algorithm}
\caption{The optimal power allocation algorithm based on BB}
\label{alg:BB}
\begin{algorithmic}[1]
\STATE { Initialization for BB:

1) Compute $\mathcal{D}(0)$ where   $\hat{\Gamma}_{\pi(j)\rightarrow\pi(j)}^m = \frac{ g_{k_m}^m P_{tot}}{ \sigma^2}$.      

2) Compute $U(1) = \overline{g} \left( \mathcal{D}_0 \right)$, $L(1) = \underline{g} \left( \mathcal{D}_0 \right)$ by solving problem \eqref{optbb3.eq}. 
 
3) Set $\{ \mathcal{R}(1) = \mathcal{D}_0 \}$, optimal lower bound $U^* = U(1)$, tolerance $\epsilon>0$ and $t=1$.}
\WHILE{$U(t)-L(t)>\epsilon$}
\STATE{ Pick $\mathcal{D} \in \mathcal{R}(t)$ for which $\underline{g}(\mathcal{D})=L(t)$ and set $\mathcal{D}(t) = \mathcal{D}$.}

\STATE{Subdivide $\mathcal{D}(t)$ along one of its longest edges into $\mathcal{B}_I$ and $\mathcal{B}_{II}$.
}

\STATE{Compute $\overline{g} \big( \mathcal{B}_I \big)$, $\underline{g} \big( \mathcal{B}_{II} \big)$ by solving problem \eqref{optbb3.eq}.}

\STATE{Update the upper bound $U(t)$ and the lower bound $L(t)$ as follows:

~~~~$L(t) = \underset{\mathcal{D} \in \mathcal{R}(t+1)}{\min}{\underline{g}({\mathcal{D}}})$;
		
~~~~$U(t) = \underset{\mathcal{D} \in \mathcal{R}(t+1)}{\min{\overline{g}}( \mathcal{D} )}$  ;

~~~~ update $U^* = \min(U^*,U(t))$. } \label{Step6}

\STATE{Update $\mathcal{R}(t+1)$ by removing all $\mathcal{B}_t$ for which $\underline{g}( \mathcal{D} ) \geq U(t+1)$. 

}
\STATE{$t:=t+1$.}
\ENDWHILE
\STATE{Output the absolute value  $|U^*|$ and the optimal power allocation $\beta$.}
\end{algorithmic}
\end{algorithm}
\begin{remark}
To ensure the global optimality, an exhaustive procedure is required. For ease of implementation, we select the bisection method to implement the subdivision of $\mathcal{D}(t)$ \cite{Horst1991}.
\end{remark}
For the set $\mathcal{D}(t)$,  let $v = \frac{1}{2}\big( \Gamma^i + \Gamma^j\big)$ denote the midpoint of the longest edge of the set $\mathcal{D}(t)$ and  $\Gamma^i$ and $\Gamma^j$ correspond to the vertexes of the longest of the edge.   Specifically, 
its subdivisions $\mathcal{B}_I$ and $\mathcal{B}_{II}$ produced by bisection can be obtained by replacing $\Gamma^i$ and $\Gamma^j$  by $v$ in $\mathcal{B}_I$ and $\mathcal{B}_{II}$.

\begin{proposition}\label{Pro3.BBConvergence}
 {\bf Algorithm \ref{alg:BB}} converges to the global optimal solution of problem \eqref{C1.optbb1}.
\end{proposition}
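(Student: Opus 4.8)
The plan is to cast \textbf{Algorithm \ref{alg:BB}} into the standard prototype branch-and-bound scheme (as in \cite{Horst1991,horst2013global}) and to verify the three ingredients that together force convergence to the global optimum: (i) the bounds computed on every box are valid and the running quantities $L(t),U(t)$ sandwich the optimal value $p^\star$; (ii) the selection rule is bound-improving while longest-edge bisection yields an exhaustive subdivision; and (iii) the bounding operation is consistent, i.e. the gap $\overline{g}(\mathcal{D})-\underline{g}(\mathcal{D})$ vanishes as the box $\mathcal{D}$ shrinks to a point. By \textbf{Proposition \ref{Pro2}} it suffices to prove this for the equivalent reformulation \eqref{optbb2.eq} (equivalently \eqref{opt5.eq}), whose optimum coincides with that of \eqref{optbb1.eq}.

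First I would establish validity of the per-box bounds. Since $\mathcal{U}(\Gamma)=-\sum_{m}\sum_{j_m}\log_2(1+\Gamma_{j_m\rightarrow j_m}^m)$ is continuous and componentwise non-increasing in $\Gamma$, its minimum over a box $\mathcal{D}=[\underline{\Gamma},\overline{\Gamma}]$ is attained at $\overline{\Gamma}$ and its maximum at $\underline{\Gamma}$; hence $\underline{g}(\mathcal{D})=\mathcal{U}(\overline{\Gamma})$ is a genuine lower bound over $\mathcal{D}$, and whenever the corner $\underline{\Gamma}$ is feasible, $\overline{g}(\mathcal{D})=\mathcal{U}(\underline{\Gamma})$ is the objective value at a feasible point and thus an upper bound on $p^\star$. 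The feasibility test $\underline{\Gamma}\in\mathcal{G}$ that both bounds require is exactly what \textbf{Theorem \ref{Theo1}} and \textbf{Algorithm \ref{alg:Check}} resolve, via the Perron--Frobenius condition $\rho(\boldsymbol{\Lambda}+\mathbf{D}\mathbf{G})<1$ (using \textbf{Lemma \ref{Lem1.irre}}) together with the power-budget check. Because $p^\star$ always lies in one of the undeleted leaves of the current partition, taking minima over all leaves gives $L(t)\leq p^\star\leq U(t)$, and the pruning step (discard $\mathcal{D}$ with $\underline{g}(\mathcal{D})\geq U(t)$) never removes a box containing a global minimizer. Monotonicity of $L(t)$ (non-decreasing) and $U^\ast$ (non-increasing) then follows directly from the update rule in Step \ref{Step6}.

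Next I would analyze the limiting behaviour. The selection in Step 3 always refines the box attaining the current lower bound, so the scheme is bound-improving; moreover, subdividing along a longest edge by bisection is the classical exhaustive rule, which forces every infinite nested sequence of selected boxes $\{\mathcal{D}(t_k)\}$ to collapse to a single point, i.e. $\mathrm{diam}\big(\mathcal{D}(t_k)\big)\to 0$. As $\mathcal{D}_0$ is compact, $\mathcal{U}$ is uniformly continuous on it, whence
\[
\overline{g}(\mathcal{D})-\underline{g}(\mathcal{D})=\mathcal{U}(\underline{\Gamma})-\mathcal{U}(\overline{\Gamma})\longrightarrow 0
\]
as $\mathrm{diam}(\mathcal{D})\to 0$ (the infeasible case is trivial, both bounds being $0$). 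Combining this consistency with the sandwich property yields $U(t)-L(t)\to 0$, so both bounds converge to $p^\star$; consequently, for any tolerance $\epsilon>0$ the while-loop halts after finitely many iterations with an $\epsilon$-optimal value $|U^\ast|$, and letting $\epsilon\to 0$ recovers the global optimum, the returned $\boldsymbol{\beta}$ being the associated optimal power allocation extracted through \textbf{Theorem \ref{Theo1}}.

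The step I expect to be the main obstacle is item (iii), the consistency of the bounding near the boundary of the achievable SINR region $\mathcal{G}$: the bound functions switch discontinuously to $0$ once the lower corner leaves $\mathcal{G}$, so I must argue that a nested sequence of selected boxes cannot oscillate across this boundary in a way that keeps the gap bounded away from zero. The clean route is to exploit that $\mathcal{G}$ is closed and, above the QoS floor $\bar{\gamma}$, downward-closed (normal) in $\Gamma$: a shrinking box converging to a point $\Gamma^\ast$ is eventually either entirely interior to $\mathcal{G}$, where uniform continuity of $\mathcal{U}$ applies, or has its lower corner persistently outside $\mathcal{G}$, where both bounds equal $0$; the pruning rule then guarantees that the remaining boundary boxes carry lower bounds no smaller than $U(t)$ and are deleted, so they cannot obstruct convergence.
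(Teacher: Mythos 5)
Your proposal is correct and follows essentially the same route as the paper's proof: both verify the standard Horst-type branch-and-bound conditions---validity of the box bound functions $\underline{g},\overline{g}$ so that $L(t)\leq p^\star\leq U(t)$, exhaustiveness of the longest-edge bisection rule (box size tending to zero), and the bound-improving selection with pruning yielding a monotone $U^\ast$---and then conclude via \cite{Horst1991}. The only substantive difference is that you additionally make explicit the consistency of the bounds near the boundary of $\mathcal{G}$ (using its closedness and downward-closedness above the QoS floor $\bar{\gamma}$, so that the lower corner of a shrinking box containing the optimizer is eventually feasible), a point the paper's three-bullet argument leaves implicit.
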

\begin{proof}
The convergence and the optimality can be proved by the following conditions:
\begin{itemize}
\item[1)] According to the characteristics of BB, at each iteration $t$, the function $g(\Gamma|\mathcal{D}(t))$ is bounded by the lower and upper bound functions: $\underline{g}(\Gamma|\mathcal{D}(t))  \leq g(\Gamma|\mathcal{D}(t)) \leq  \overline{g}(\Gamma|\mathcal{D}(t))$. 

\item[2)] The subdivision procedure is exhaustive since $\lim_{t \rightarrow \infty} V(\mathcal{D}(t)) = 0$, where $V(\mathcal{D}(t))$ denotes the size of $\mathcal{D}(t)$. Hence, the sequence of the global upper bound $U^*$ obtained by any infinite subdivisions with bisection is exhaustive.

\item[3)] By step \ref{Step6}, the minimization operations are performed on the lower and upper bounds. Hence, the global upper bound $ U^*(t+1) \leq U^*(t)$, which is a decreasing sequence.
\end{itemize}
Based on the above facts, {\bf Algorithm \ref{alg:BB}} searches every possible points in the feasible set $\mathcal{D}$ and thus is a global solution according to \cite{Horst1991}.
\end{proof}

\begin{remark}
At the $t$-th iteration of {\bf Algorithm \ref{alg:BB}}, $U(t)$ and $L(t)$ are the minimums over all the upper bounds and lower bounds at each leaf nodes in the BB tree, respectively,  which give a global upper bound and lower bound on the optimal value of \eqref{optbb1.eq}. The stopping criterion for {\bf Algorithm \ref{alg:BB}} can be $U(t)-L(t) \leq \epsilon$ for given a small $\epsilon$, which means that $  U^* - \epsilon \leq \mathcal{U}^{\mathrm{opt}}$.
\end{remark}

\begin{remark}
The overall complexity of {\bf Algorithm \ref{alg:BB}}  is determined by the complexity of at each iteration and the number of iterations required for achieving the desired tolerance. In general, the worst case computational complexity of {\bf Algorithm \ref{alg:BB}} is exponential in the number of variables. 
\end{remark}

In summary, the proposed joint  user scheduling and power allocation algorithm based on BB technique is summarised in {\bf Algorithm \ref{alg:JUSPA}}. In {\bf Algorithm \ref{alg:JUSPA}}, let $\Theta$ be the  all possible user scheduling combinations with all possible decoding order. For each search, the optimal power allocation is attained by BB algorithm.
\begin{algorithm}
\caption{Joint user scheduling and power allocation algorithm}
\label{alg:JUSPA}
\begin{algorithmic}[1]
\STATE{Construct the set $\Theta$ contains all possible user  scheduling  combinations and all possible decoding order. Set $n = 1$.}
\WHILE{$\Theta $ is not empty set}
\STATE{Check the feasibility of the given $c$ and $\pi$ by solving problem \eqref{optbb0.eq}.}
\STATE {Using  {\bf Algorithm \ref{alg:BB}} to solve problem \eqref{optbb1.eq}.}
\ENDWHILE
\STATE{$\mathcal{U}(n) = |U^*|$ and $n := n +1$. }
\STATE{Output the optimal objective value $\mathcal{U}^* = \max(\mathcal{U})$.}
\end{algorithmic}
\end{algorithm}

\begin{remark}
The complexity of {\bf Algorithm \ref{alg:JUSPA}} is determined by the search space of $\Theta$ and the complexity at each search. As known the exhaust search is exponential complexity with $\mathcal{O}(K^{M_t})$.
\end{remark}

 \vspace{-0.5cm} 
\section{Low Complexity Solutions}
 \vspace{-0.3cm} 
The computation is cumbersome  to the global solution, specially when the size of the problem becomes large. In order to reduce the computational complexity, our goal in this section is to propose a low complexity  algorithm that obtains a suboptimal solution of problem \eqref{opt1.eq}.  

 \vspace{-0.9cm} 
\subsection{SCA-based Suboptimal Power Allocation Algorithms}
 \vspace{-0.3cm} 
To begin with, we  consider the power allocation in \eqref{optbb1.eq} for the given the scheduled users and decoding order.  In this subsection, we develop a low complexity power allocation algorithm based on first-order approximations and SCA. 

To handle the nonconvex objective function in \eqref{optbb1.eq}, we approximate the the nonconvex objective by the following lower bound \cite{Papan06ICC}:
{\setlength\abovedisplayskip{3pt} 
\setlength\belowdisplayskip{3pt}
 {\small
\begin{align}\label{loglb.eq}
\mu\ln(\tau) + \nu \leq \ln(1+\tau), 
\end{align}}}where
{\setlength\abovedisplayskip{3pt} 
\setlength\belowdisplayskip{3pt}
 {\small
\begin{align}\label{loglbPar.eq}
\begin{cases} \mu &= \frac{\tilde{\tau}}{1+\tilde{\tau}}\\
\nu &= \log(1+\tilde{\tau})-\frac{\tilde{z}}{1+\tilde{\tau}}
 \end{cases}
\end{align}}}\par
The approximation in \eqref{loglb.eq} is tight at a chosen value $\tilde{\tau}$ when the constants $\{\mu,\nu \}$ are chosen as \eqref{loglbPar.eq}. Thus, given a set of fixed $\{\mu,\nu \}$,  problem  \eqref{optbb1.eq} can be approximated as follows.
{\setlength\abovedisplayskip{3pt} 
\setlength\belowdisplayskip{3pt}
 {\small
\begin{subequations}\label{optsca1.eq}
\begin{align}
\max_{\beta} \quad &  \sum_{m=1}^M\sum_{j_m=1}^{q_m}\frac{1}{\ln_2} \Big(\mu_{j_m}^m \ln(\mathrm{SINR}_{j_m \rightarrow j_m}^m) + \nu_{j_m}^m \Big) \\
\mathrm{s.t.}\quad  &\sum_{n\neq m} \Big(g_{k_m}^m  g_{j_m}^n - g_{j_m}^m g_{k_m}^n\Big)\beta^n + \big(g_{k_m}^m - g_{j_m}^m \big) \sigma^2 \geq 0, \label{C1.optsca1}\\
& \sum_{m=1}^M \sum_{j_m=1}^{q_m} \beta_{j_m}^m \leq P_{tot},\label{C2.optsca1}\\
&\mu_{j_m}^m \ln(\mathrm{SINR}_{j_m \rightarrow k_m}^m) + \nu_{j_m}^m \geq \ln_2\bar{R}_{j_m}, \label{C3.optsca1}\\
&k_m>j_m,~j_m, k_m \in \mathcal{C}_m,~m \in \mathcal{M}.\label{C4.optsca1}
\end{align}
\end{subequations}}}\par 
However, \eqref{optsca1.eq} is still non-convex, since the objective function and constraint \eqref{C3.optsca1} is not concave in $\beta$.  To proceed further, a variable transformation $x_{j_m}^m = \ln(\beta_{j_m}^m)$ is introduced.  As a result, for any $j_m \in \mathcal{C}_m$ and $k_m \in \mathcal{C}_m$ with $k_m \geq j_m$, we have 
{\setlength\abovedisplayskip{3pt} 
\setlength\belowdisplayskip{3pt}
 {\small
\begin{eqnarray}
\begin{aligned}
& \ln(\mathrm{SINR}_{j_m \rightarrow k_m}^m) =\ln(g_{k_m}^m) + x_{j_m}^m -  \ln\Big( \sum_{i_m=j_m+1}^{q_m} g_{k_m}^m e^{x_{i_m}^m} + \sum_{n \neq m}g_{k_m}^n e^{x^n} + \sigma^2 \Big),
\end{aligned}
\end{eqnarray}}}\par
Now, we consider the constraint in \eqref{C1.optsca1} becomes
{\setlength\abovedisplayskip{3pt} 
\setlength\belowdisplayskip{3pt}
 {\small
\begin{align} \label{C1re.optsca1}
\sum_{n\neq m} g_{j_m}^m g_{k_m}^ne^{x^n}- \big(g_{k_m}^m - g_{j_m}^m \big) \sigma^2 \leq \sum_{n\neq m} g_{k_m}^m  g_{j_m}^n e^{x^n},
\end{align}}}which is non-convex. However it can be approximated by applying the first-order Taylor approximation when giving a point $\tilde{x}^n$. Let $F(x^n) = \sum_{n\neq m} g_{k_m}^m  g_{j_m}^n e^{x^n}$.
{\setlength\abovedisplayskip{3pt} 
\setlength\belowdisplayskip{3pt}
 {\small
\begin{eqnarray}\label{Taylor.eq}
\begin{aligned}
F(x^n) &=  F(\tilde{x}^n) + \nabla_{x^n} F(\tilde{x}^n)(x^n - \tilde{x}^n) \\
&= F(\tilde{x}^n) + \sum_{n\neq m} g_{k_m}^m  g_{j_m}^n \sum_{i_n=1}^{q_n} (x_{i_n}^n - \tilde{x}_{i_n}^n).
\end{aligned}
\end{eqnarray}}}\par
Substituting \eqref{C1re.optsca1} and  \eqref{Taylor.eq}  into problem \eqref{optsca1.eq}, we can obtain the following approximation of problem \eqref{optsca1.eq}:
{\setlength\abovedisplayskip{3pt} 
\setlength\belowdisplayskip{3pt}
 {\small
\begin{eqnarray}\label{optsca2.eq}
\begin{aligned}
\max_{\beta} \quad &   \sum_{m=1}^M\sum_{j_m=1}^{q_m}\frac{1}{\ln_2} \Big(\mu_{j_m}^m \ln(\mathrm{SINR}_{j_m \rightarrow j_m}^m) + \nu_{j_m}^m \Big) \\
\mathrm{s.t.}\quad  & g_{j_m}^m g_{k_m}^n\beta^n - \big(g_{k_m}^m - g_{j_m}^m \big) \sigma^2  \leq F(\tilde{x}^n) + \\&\sum_{n\neq m} g_{k_m}^m  g_{j_m}^n \sum_{i_n=1}^{q_n} (x_{i_n}^n - \tilde{x}_{i_n}^n), \\
& \sum_{m=1}^M \sum_{j_m=1}^{q_m} \beta_{j_m}^m \leq P_{tot},\label{C2.optsca2}\\
&\mu_{j_m}^m \ln(\mathrm{SINR}_{j_m \rightarrow j_m}^m) + \nu_{j_m}^m \geq \ln_2\bar{R}_{j_m}, \label{C3.optsca2}\\
&k_m>j_m,~j_m, k_m \in \mathcal{C}_m,~m \in \mathcal{M},\label{C4.optsca2}
\end{aligned}
\end{eqnarray}}}\par
Problem \eqref{optsca2.eq} is a convex optimization problem. It can be solved by a standard convex tool such as cvx \cite{cvx}. 
\begin{remark}
Problem \eqref{optsca2.eq} is a lower bound approximation of problem \eqref{optbb1.eq} because of the relaxation in \eqref{loglb.eq} and the first-order approximation in \eqref{Taylor.eq}. 
\end{remark}

Since problem \eqref{optsca2.eq} is obtained by approximating problem \eqref{optbb1.eq} at a feasible point set $\{\tilde{x}_{i_n}^n\}$. The approximation can be further improved by successively approximating problem \eqref{optbb1.eq} based on the optimal solution $\{\tilde{x}_{i_n}^n\}$ obtained  by solving problem  \eqref{optsca2.eq} in the previous approximation. Therefore, the proposed successive approximation approach can be described in the following.
\begin{algorithm}
\caption{SCA algorithm for solving \eqref{optbb1.eq}}
\label{alg:SCA}
\begin{algorithmic}[1]
\STATE{Initialize  a set of feasible power allocation coefficient $\beta$. }
\STATE{Compute the objective value in \eqref{optsca2.eq}, denoted as $\Phi[0]$. Set $t=1$.}
\WHILE{$\frac{|\Phi[t] - \Phi[t-1]|}{\Phi[t-1]} \leq \epsilon'$, where $\epsilon'$ is a given stopping criterion.}
\STATE{t := t+1.}
\STATE{Solve problem \eqref{optsca2.eq}, to obtain the optimal solution $\{\phi_{j}^m[t] ,~j\in \mathcal{K}\}$ and $\boldsymbol{\beta}[t]$.}
\ENDWHILE
\STATE{Output the optimal $\{\phi_{j}^m[t] ,~j\in \mathcal{C}\}$ and $\boldsymbol{\beta}$ }
\end{algorithmic}
\end{algorithm}

\begin{remark}
In each iteration of {\bf Algorithm \ref{alg:SCA}}, the sum rate will be improved successively. However, due to the total power constraint, the generated sum rate sequence is bounded, which implies the convergence of {\bf Algorithm \ref{alg:SCA}}.
\end{remark}

\begin{remark}
Since the approximation in \eqref{loglb.eq} and  \eqref{Taylor.eq} are lower bound approximation for problem \eqref{optbb1.eq}, the solution generated by  {\bf Algorithm \ref{alg:SCA}} is suboptimal.
\end{remark}

 \vspace{-0.5cm} 
\subsection{Many-to-One Matching Algorithm for User Scheduling}
 \vspace{-0.3cm} 
To avoid combinatorial complexity in exhausting search, in this section, we propose a low complexity user scheduling algorithm based on matching theory \cite{Gale62,bando2012many}.
Given the user power allocation coefficients, the optimization problem \eqref{opt1.eq} can be transformed into
{\setlength\abovedisplayskip{3pt} 
\setlength\belowdisplayskip{3pt}
 {\small
 \begin{subequations}\label{opt1match.eq}
\begin{align}
 \max_{\mathbf{c}} \quad &  \mathcal{H} = \sum_{m=1}^M\sum_{j=1}^{q_m} R_{\pi_m(j)\rightarrow\pi_m(j)}^m\\
\mathrm{s.t.}\quad  &\eqref{C4.opt1}-\eqref{C6.opt1}
\end{align}
\end{subequations}}}which can be formulated as  a many to one bipartite matching problem with externalities
among users \cite{bando2012many}. Based on the concept of stable matching, we will develop a low complexity matching algorithm in the following.

\subsubsection{Preliminaries of matching theory in user scheduling}
Based on the definitions of  $\mathcal{M}$ and $\mathcal{K}$ in Section \ref{SigMod}, one can know that $\mathcal{M}$ and $\mathcal{K}$ are disjoint sets. In NOMA, each beam can support multiple users simultaneously, but each user is allowed to access for at most one beam. 
Thus, in matching, there exists a positive quota $q_m$ which indicates the number of users a beam has to support. The quota for each beam may be different. This problem is to match the users to the beams. This is a many-to one matching problem. These types of problems have a long history in economics, such as the marriage problem ($q_m=1$)\cite{Gale62} and  workers/firms problem \cite{bando2012many} or hospitals/residents problem\cite{sasaki1996two} with $q_m>1$.

\begin{definition}
A many-to-one matching  $\varphi$ is a function from the set $\mathcal{M} \bigcup \mathcal{K}$ into the set of unordered families of elements of $\mathcal{M} \bigcup \mathcal{K}\bigcup \{0\}$ such that
\begin{itemize}
\item[1)] $|\varphi(k)| \leq 1$ for every user $k \in \mathcal{K}$;

\item[2)]   $|\varphi(m)| = q_m$ for every beam $m \in \mathcal{M}$;

\item[3)] $\varphi(k) \in \mathcal{M}$ if and only if $k \in \varphi(M)$;

\item[4)] $k \in \varphi(m) \Leftrightarrow \varphi(k)=m$.

\end{itemize}
\end{definition} \label{Def1}
The notation $\varphi$ has different meanings depending on the parameter. If the parameter is a user $k$, then $\varphi(k)$ maps to the matched beam. If the parameter is a beam $m$, then $\varphi(m)$ gives the set of matched users.

\begin{proposition}
The user paring problem can be formulated as a many-to-one matching problem with  externalities among  users.  
\end{proposition}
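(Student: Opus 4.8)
The plan is to exhibit an explicit correspondence between the feasible assignments of problem \eqref{opt1match.eq} and the many-to-one matchings $\varphi$ of \textbf{Definition 1}, and then to identify the source of externalities in the objective. First I would take the two disjoint sides of the matching market to be the beam set $\mathcal{M}$ and the user set $\mathcal{K}$, as already noted, and associate to every binary assignment $\mathbf{c} = \{c_k^m\}$ the map $\varphi$ defined by $\varphi(k) = m \Leftrightarrow c_k^m = 1$, with $\varphi(k) = 0$ when user $k$ is left unscheduled. The bulk of the routine argument is then to verify that the constraints \eqref{C4.opt1}--\eqref{C6.opt1} are exactly the defining conditions of a many-to-one matching.

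Concretely, I would check the conditions of \textbf{Definition 1} one by one. Constraint \eqref{C5.opt1}, $\sum_{m} c_k^m \leq 1$, is precisely condition~1, $|\varphi(k)| \leq 1$, since each user occupies at most one beam. Constraint \eqref{C4.opt1}, $\sum_{k} c_k^m = q_m$, is condition~2, $|\varphi(m)| = q_m$, the quota requirement on each beam. Conditions~3 and~4 are the mutual-consistency requirements $\varphi(k) = m \Leftrightarrow k \in \varphi(m)$, which hold automatically because $\varphi$ is read off from the single indicator set $\{c_k^m\}$: a user lies in a beam's matched set if and only if that user's indicator for the beam equals one. The remaining part of \eqref{C6.opt1} merely restricts the admissible $\varphi$ to feasible index ranges, so the correspondence $\mathbf{c} \leftrightarrow \varphi$ is a bijection between feasible points of \eqref{opt1match.eq} and matchings in the sense of \textbf{Definition 1}.

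The decisive step is to argue that the matching carries externalities. Here I would point to the objective $\mathcal{H} = \sum_{m}\sum_{j} R_{\pi_m(j)\rightarrow \pi_m(j)}^m$ together with the rate expression \eqref{Rjj.eq}: the SINR of a user scheduled on beam $m$ contains the inter-beam interference term $\sum_{n \neq m} g_{j_m}^n \beta^n$, where $\beta^n = \sum_i c_i^n \beta_i^n$ is the power carried on beam $n$. Consequently the utility a user derives from being matched to beam $m$ depends not only on its own match but on which users are matched to the other beams $n \neq m$. I would formalise this by defining each user's and each beam's preference in terms of the quantity $R_{\pi_m(j)\rightarrow\pi_m(j)}^m$ and observing that this preference is a function of the entire matching $\varphi$ rather than of the pair $(k,m)$ in isolation; this dependence on the rest of the configuration is exactly the notion of a matching with externalities in the sense of \cite{bando2012many}. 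Combining this preference structure with the bijection above completes the reformulation.

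The main obstacle I anticipate is not the constraint-to-condition bookkeeping, which is routine, but making the externality precise enough to be usable downstream: because each user's preference ordering over beams shifts with the matching of every other beam, preferences are not fixed and the classical stability machinery (e.g.\ Gale--Shapley) does not apply verbatim. I would therefore take care to phrase every preference as being evaluated at a given incumbent matching, so that the subsequent stable-matching algorithm can be defined consistently despite the interference coupling, rather than attempting to assign each user a static preference list.
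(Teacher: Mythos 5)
Your proposal is correct and follows essentially the same route as the paper's proof: identify the scheduling variables $c_k^m$ with a matching $\varphi$ in the sense of Definition 1 (your explicit bijection and constraint-by-constraint check just makes rigorous what the paper asserts in one line), and then locate the externalities in the interference coupling of the achievable rates, concluding that preferences depend on the entire matching as in \cite{bando2012many}. The only minor difference is that the paper attributes the externalities to both inter-beam \emph{and} intra-beam interference --- the dependence of a user's rate on its co-scheduled peers through the term $g_{j}^m\sum_{\pi_m(i)>\pi_m(j)} c_i^m\beta_{i}^m$ in \eqref{Rjj.eq} is also an externality (a peer effect) --- whereas you emphasize only the inter-beam term, though your blanket observation that preferences are functions of the whole matching $\varphi$ already subsumes it.
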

\begin{proof}
From Definition \ref{Def1}, one can easy obtain that  the user paring problem in \eqref{opt1match.eq} is a many to one matching problem. Due to the inter-beam interference and the intra-beam interference existed for each user's achievable  rate,  each beam' preferences depend not only on users whom it support, but also on users whom the other beams support. Similarly, each user's preferences is not related with the only  beam it matched but all of the beams.  Based on these features, one can conclude that this problem is a  a many-to-one matching problem with with externalities \cite{bando2012many,sasaki1996two,bodine2011peer}.
\end{proof}

To model the  externalities, we define the preference value for the user $k$ on beam $m$ as  the achievable  rate:
{\setlength\abovedisplayskip{3pt} 
\setlength\belowdisplayskip{3pt}
 {\small
\begin{align}
\mathcal{H} _{k}^m = \log_2 \Big( 1 + \Gamma_{k}^m \Big).
\end{align}}}\par
Then we define the preference value of beam $m$ as
{\setlength\abovedisplayskip{3pt} 
\setlength\belowdisplayskip{3pt}
 {\small
\begin{align} \label{Rate_m.eq}
\mathcal{H} ^m = \sum_{k \in \varphi(m)}\log_2 \Big( 1 + \Gamma_{k}^m \Big).
\end{align}}}\par
Thus, in this matching model, each beam $m$ has a strict preference ordering $\succ_m$ over $\mathcal{K}$. Each user also has a preference relation $\succ_k$ over the set $\mathcal{M}\bigcup \{0\}$, where $\{0\}$ denotes the user is unmatched.  Specifically, for a given user $k$, any two beam $m$ and $m'$ with $m,m' \in \mathcal{M}$, any two matchings $\varphi$ and $\varphi'$ is defined as
{\setlength\abovedisplayskip{3pt} 
\setlength\belowdisplayskip{3pt}
 {\small
\begin{align}
(m,\varphi) \succ_k (m',\varphi') \Leftrightarrow  \mathcal{H}_{k}^m (\varphi) >  \mathcal{H}_{k}^m (\varphi'),
\end{align}}}which indicates that user $k$ prefers beam $m$ in $\varphi$ to beam $m'$ in $\varphi'$ only if user $k$ can achieve a higher rate on beam $m$ than beam $m'$. Similarly, for any beam $m$, its preference $\succ_m$ over the user set can be describe as follows. For any two subsets of users $C$ and $C'$ with $C \neq C'$ and any two matchings  $\varphi$ and $\varphi'$, $C = \varphi(m)$, $C' = \varphi'(m)$:
{\setlength\abovedisplayskip{3pt} 
\setlength\belowdisplayskip{3pt}
 {\small
\begin{align}
(C,\varphi)\succ_m (C',\varphi') \Leftrightarrow  \mathcal{H}^m (\varphi) >  \mathcal{H}^m (\varphi'),
\end{align}}}which denotes that beam $m$ prefers the set of users $C$ to $C'$ only when beam $m$ can get a higher rate from $C$.

Since externalities exist in the formulated matching problem, it is not straightforward to define a stability concept because a stablility of a matching depends on how a deviating pair expects the reaction of the other agents \cite{bando2012many}. To tackle the externalities, the two-sided exchange stability 
 has been introduce in \cite{bodine2011peer}. Based on the concept of two-sided exchange-stable matchings, we propose a matching algorithm for the user paring problem in the next subsection.

\subsubsection{Designs of many-to-one matching algorithm}

To define exchange stability, it is convenient to first define a swap matching $\varphi_k^j$ in which user $k$ and user $j$ switch beams while keeping other users' assignments the same. We define a swap operation among the users to exchange their matched beams. A swap matching between user $j$ and user $k$ is define as follows. 
\begin{definition}
A swap matching is defined as $\varphi_k^j = \{ \varphi \setminus \{(k,m),(j,n)\}  \bigcup \{(j,m),(k,n)\} \}$
\end{definition} 
where $\varphi(k)=m$ and $\varphi(j) = n$.  To approve a swap operation, we introduce the concept of swap-blocking pair.
\begin{definition}
Given a matching function $\varphi$ and a pair of users $(k,j)$, if there exist $m=\varphi(k)$ and $n = \varphi(j)$ such that 
\begin{itemize}
\item[1)] $\forall x \in \{k,j,m,n \}$, $U_m(\varphi_k^j) \geq U_m(\varphi)$;

\item[2)] $\exists x \in \{k,j,m,n \}$, such that $U_m(\varphi_k^j) > U_m(\varphi)$,
\end{itemize}  
then the swap matching $\varphi_{k}^j$ is approved, and $(k,j)$ is called a swap-blocking pair in $\varphi$.
\end{definition}
The features of the swap-blocking pair ensure that if a swap matching is approved, then the achievable rates of any user involved will not decrease, and at least one user's achievable rate will increase.

Based on the above discussions, we can describe the users' behaviours in the many-to-one matching with externalities as bellow. Every two users can be arranged by the BS to form a candidate swap blocking pair. The BS checks whether they can benefit each other by exchanging their matches without hurting the interests of corresponding beams.   Through a series of  swap operations, the  matching can reach a stable status, also known as a two-sided exchange stable
matching defined as below.
\begin{definition}
A matching $\varphi$ is two-sided exchange-stable (2ES) if and only if there does not exist a swap-blocking pair.
\end{definition}

With the definition of above, we conclude the proposed user scheduling algorithm in {\bf Algorithm \ref{alg:C}}.

\begin{algorithm}
\caption{ User scheduling based on Matching Theory }
\label{alg:C}
\begin{algorithmic}[1]
\STATE{ Initialize the candidate user set $\mathcal{A}$ by {\bf Algorithm \ref{alg:GS}}. }
\REPEAT
\STATE{ For any user $k \in \mathcal{A}$, it searches for another user $j \in \mathcal{A}\setminus \mathcal{A}(\varphi(k))$. 
 }
\IF{$k,j$ is a swap-blocking pair}
\STATE{$\varphi = \varphi_k^j$}
\ELSE
\STATE{Keep the current matching state}
\ENDIF
\UNTIL{No swap-blocking pair is found}
\STATE{The stable matching $\varphi$}
\end{algorithmic}
\end{algorithm}

\begin{algorithm}
\caption{Initialization Algorithm}
\label{alg:GS}
\begin{algorithmic}[1]
\REQUIRE ~~\\
\STATE{Initialize the preference lists for all users and beams based on the scalar channel gain $|\mathbf{h}_k \mathbf{w}_{m'}|^2,~k \in \mathcal{K}~ \mathrm{and}~m' \in \mathcal{M}$. }
\STATE{ Set the user set of accepted by beam $m$  $\mathcal{A}^0(m) = \emptyset$ ,  the set of rejected users $\mathcal{W}^{0}(m) = \emptyset$ and the set of rejected beams $\mathcal{W}^{0}(k) =\emptyset$. Set $t=0$.}
\REPEAT
\STATE{$t := t+1$}
\STATE{All users not yet assigned $k \in \mathcal{K} \backslash \bigcup_{m\in \mathcal{M}} \mathcal{A}^{t-1}$ propose to their current best beam that has not reject user $k$, i.e.
$m = \mathrm{arg} \max_{m' \in \mathcal{M}\setminus \mathcal{W}(k)^{t-1} } |\mathbf{h}_k \mathbf{w}_{m'}|^2$.   }
\STATE{Denote the users who propose to beam $m$ as $\tilde{k}^m_1,\cdots,\tilde{k}^m_{s'}$.}
\STATE{Beam $m$ accepts the first $q_m$ best ranked users from $\mathcal{S} = \{ k_1^m,\cdots,k_s^m\} = \mathcal{A}^{t-1}(m) \bigcup \{\tilde{k}^m_1,\cdots,\tilde{k}^m_{s'}\}$ and update $\mathcal{A}^t(m) = \{k_1^m,\cdots ,k_{q_m}^m\}$,  where $s$ is the total number propose to the beam $m$. }.
\STATE{Update the set of rejected users $\mathcal{W}^{t-1}(m) = \{ k_{q_m+1}^m,\cdots,k_{s}^m\}$ and the set of rejected beams $\mathcal{W}^{t-1}(k) = \{m \in \mathcal{M}: k \in \mathcal{W}^{t-1}(m)\}$. 
 }
\UNTIL{All beams are achieved its maximum number of users or each remained user has been rejected by all beams.}
\STATE{Output  $\varphi$ and $\mathcal{A} = \{\mathcal{A}^{t}(m), ~m=1,\cdots,M\}$.  }
\end{algorithmic}
\end{algorithm}
\begin{remark}
Note that the initialization algorithm is a deferred acceptance algorithm \cite{Gale62}, the complexity mainly lies in the number of the user proposing. In the worst case, the proposing number is $KM$. In addition, the maximum number of swap operations in {\bf Algorithm \ref{alg:C}} is $M^2q_m^2$. 
\end{remark}

 \vspace{-0.5cm} 
\section{Simultion results}
 \vspace{-0.3cm} 
In this section, simulations are conducted to evaluate the proposed algorithms. We consider the channel model described in \eqref{Chmodl.eq}, with a number of paths $L = 3$. The AoDs are assumed to take continuous values, and are uniformly distributed in $[0,2\pi]$. The BS randomly generated $M$ orthogonal beam. The mmWave system is assumed to operated at $28$ GHz carrier frequency. The bandwidth of the system is assumed $100$ MHz  and with path-loss exponent $c_{LoS} = 2$ and $c_{NLoS} = 3$.   In the following simulations, we assume that the users are uniformly distributed in a single cell with radius $R_c$ and  the  SNR in the plots is defined as $\mathrm{SNR} = \frac{P \eta}{\sigma^2M}$ \cite{Alkhateeb15TWC}. In addition, the stopping criteria $\epsilon = 0.1$ and $\epsilon' = 0.05$. In addition, we assume $q_1=\cdots=q_M=q$, which indicates that each beam can be occupied by $q$ users simultaneously. All  users have the same QoS constraint is they are scheduled, i.e., $\bar{R}_j = R_{th}, ~j \in \mathcal{C} $.

We first evaluate convergence of the proposed  BB algorithm and the SCA algorithm  solving problem \eqref{optbb1.eq} in Fig. \ref{Convergence.Fig} for different SNR.  As can be observed from Fig. \ref{Convergence.Fig}, both of the proposed BB  and SCA algorithms are converged for different SNR. In the proposed BB algorithm, 
the upper bound and the lower bound  become tighter as the number of  iterations grows. In addition, though some performance loss has been caused by the proposed SCA algorithm,  the convergence speed of SCA is much faster than the proposed BB algorithm. The reason is that  the BB algorithm performs a  bisection division process  for each dimension, which approaches  to the  exhaustive search in a small scale of $\epsilon$. 

In Fig. \ref{ComVsAlg.Fig}, we investigate the sum rate versus the SNR both in mmWave NOMA systems and mmWave OMA systems with  different algorithms. 
As can be observed from Fig.  \ref{ComVsAlg.Fig}, the sum rate of all algorithms increases monotonically with the SNR. This is because the sum rate can be improved by optimizing  user scheduling and power allocation via solving the problem in \eqref{opt1.eq}. However, the multiuser mmWave system is interference limited due to the inter-beam interference exist, the sum rate will not be improved with increasing the SNR.  In particular,  three different algorithms solving problem \eqref{opt1.eq} are plotted in Fig. \ref{ComVsAlg.Fig}: the global optimal algorithm-Exhaust+BB, the moderate complexity algorithm- Matching+BB and the low complexity algorithm - Matching+SCA. As shown, the sum rate of Exhaust+BB  
grows faster than Matching+BB and Matching+SCA at the cost of the high complexity.  Besides, compared with  Exhaust+BB and Matching+BB,    Matching+SCA  achieves a good sum rate performance. Particularly,  a same sum rate can be obtaibed with Matching+BB in the SNR regions of $0 \sim 10$ dB which indicates that the proposed suboptimal power allocation  algorithm is efficient to solving problem \eqref{optbb1.eq}. 
In addition, it can be observed from Fig. \ref{ComVsAlg.Fig}, the sum rate of the mmWave NOMA system outperforms that in the conventional mmWave OMA systems. This reveals that the application of NOMA into mmWave can further improve the spectral efficiency.

Due to the high free-space path loss, there are different $\beta$ values on different mmWave frequencies. We examine the effects on the sum rate with different mmWave frequencies in Fig. \ref{ComVsFreq.Fig}.
 Fig. \ref{ComVsFreq.Fig} illustrates the sum rate versus the SNR at different mmWave frequency  for $K=100$ for $f_c=28$ GHz and $f_c = 60$ GHz. 
We observe that the proposed mmWave NOMA system can achieve high sum rate under $f_c = 28$ GHz compared to $f_c = 60$ GHz, due to the fact that mmWave link at 60 GHz has higher LoS and NLoS path loss exponents than that at 28 GHz, which leads to lower signal strength at users. In addition, the gap between mmWave NOMA and  mmWave OMA decreases when the SNR becomes large. The reason is that at the high SNR regions, the multiuser mmWave system  becomes  interference-limited. In this case, the inter-beam interference becomes the main factor  to restrain  the sum rate increases in the mmWave NOMA system the  mmWave OMA system .   
The impact of the proposed power allocation algorithm using SCA  on the sum rate under different mmWave frequencies is also plotted in Fig. \ref{ComVsFreq.Fig}. To validate the effectiveness, we compare the proposed SCA algorithm with the fixed power allocation scheme in mmWave-NOMA, called Match+Fixed NOMA. In Match+Fixed NOMA, we assume that the total power is distributed uniformly on  each beam, and the power allocation between the users in each beam is assumed to be $\beta_1$ and $\beta_2$ for the users with the better equivalent channel gain and the poorer equivalent channel gain, respectively. As can be observed that the proposed SCA algorithm can enhance the sum rate efficiently compared to the fixed power allocation scheme.

In Fig. \ref{ComVsNumUsers.Fig}, we investigate the sum rate of the mmWave system versus the total number of users for SNR = 10 dB, $q =2$. $R_{th} = 0.02$. Here, the average number of the selected users are fixed with $Mq$ for different algorithms. As can be observed from Fig. \ref{ComVsNumUsers.Fig},  the sum rate increases with the total number of users for the curves of Maching+SCA and Matching+Fixed Power. The reason is that  the inter-beam interference can be suppressed greatly  when the number of users  increases by the proposed matching algorithm.  However,the performance of  the random user scheduling algorithm is unsatisfied this is because the inter-beam interference can not be suppressed via random user scheduling. In addition, some users with very poor channel conditions will be scheduled which will decrease the total sum rate. Therefore,  user scheduling is  important for the proposed  mmWave NOMA systems. Furthermore, note that from Fig. \ref{ComVsNumUsers.Fig}, this increasing trend becomes slower as the total number of users becomes larger, since when the total number of users becomes large enough, the inter-beam interference will approach to constant. 

	In Fig. \ref{ComVsqm.Fig}, we investigate the total sum rate versus maximum numbers of users sharing the same beam, $q$ in Matching+SCA. Different total number of users are considered with $K=100$ and $K=200$. 
	 As can be observed from Fig. \ref{ComVsqm.Fig}, the proposed mmWave NOMA system can achieve the better sum rate when the SNR increases. 	Besides, compared to the case of $K=100$, the sum rate can be improved by increasing the number of users.  
It can  also be observed that the sum rate increases with increasing $q$, because more users are accessed to the same resource. Hence, the mmWave system is capble of obtain more performance gains by applying NOMA. Furthermore, compare with the gap between $q=1$ and $q=2$, the gap becomes smaller from $q=2$ to $q=3$ when $K=100$, which is because of   the total transmission power constraint at the BS.

\begin{figure}[!htb] 
\begin{tabular}{cc}
\begin{minipage}[t]{0.48\linewidth}
\centering
\includegraphics[width= 1\linewidth]{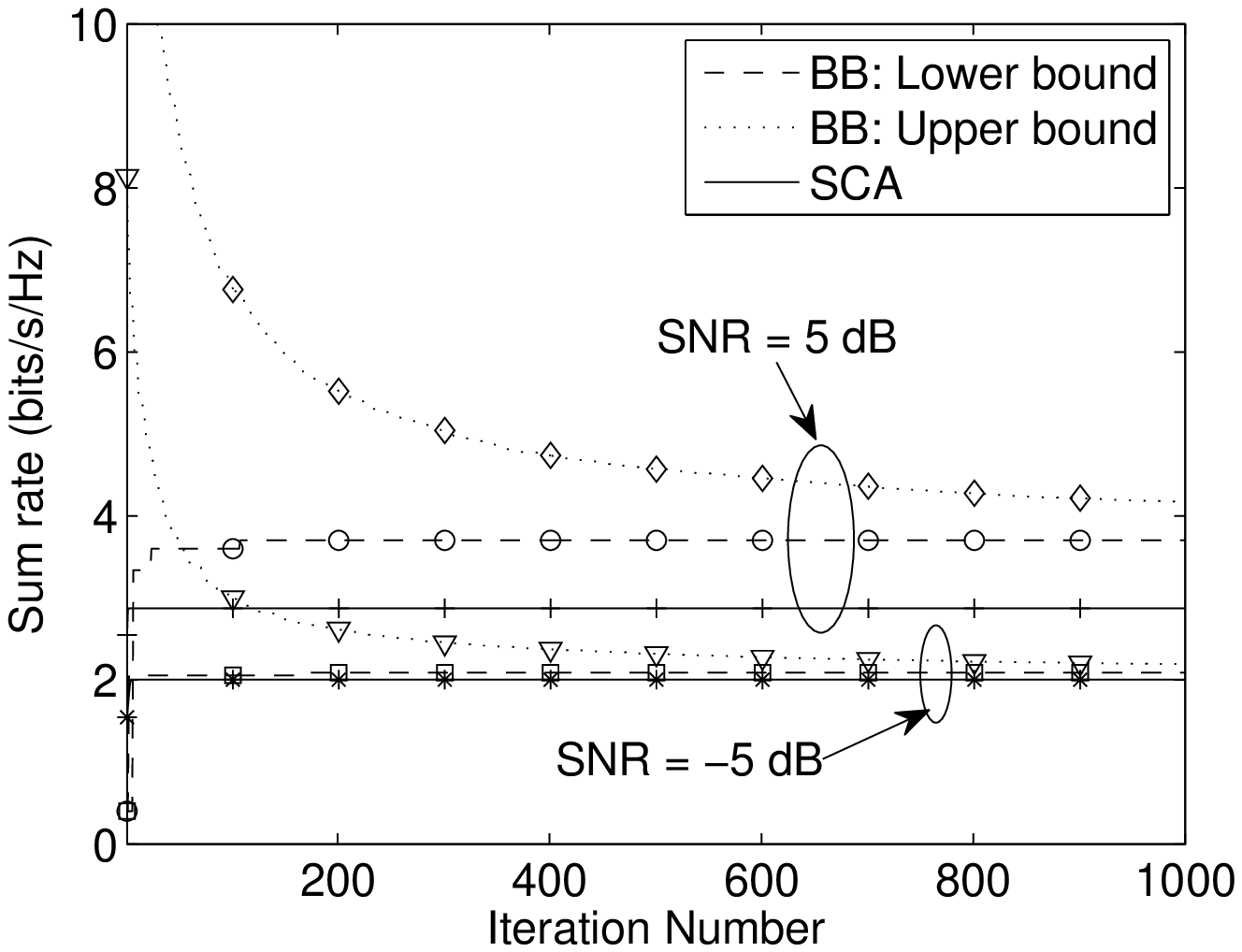}
 \vspace*{-1.5em} \caption{Comparisons of sum rate over iteration numbers between BB and SCA:   $N_{\mathrm{RF}} = 4$, $M = 2$, $q = 2$, $R_{th} = 0.1~ \mathrm{bits/s/Hz}$ and $R_c = 10$m.}\label{Convergence.Fig}
\end{minipage}
\begin{minipage}[t]{0.48\linewidth}
\centering
\includegraphics[width= 1\linewidth]{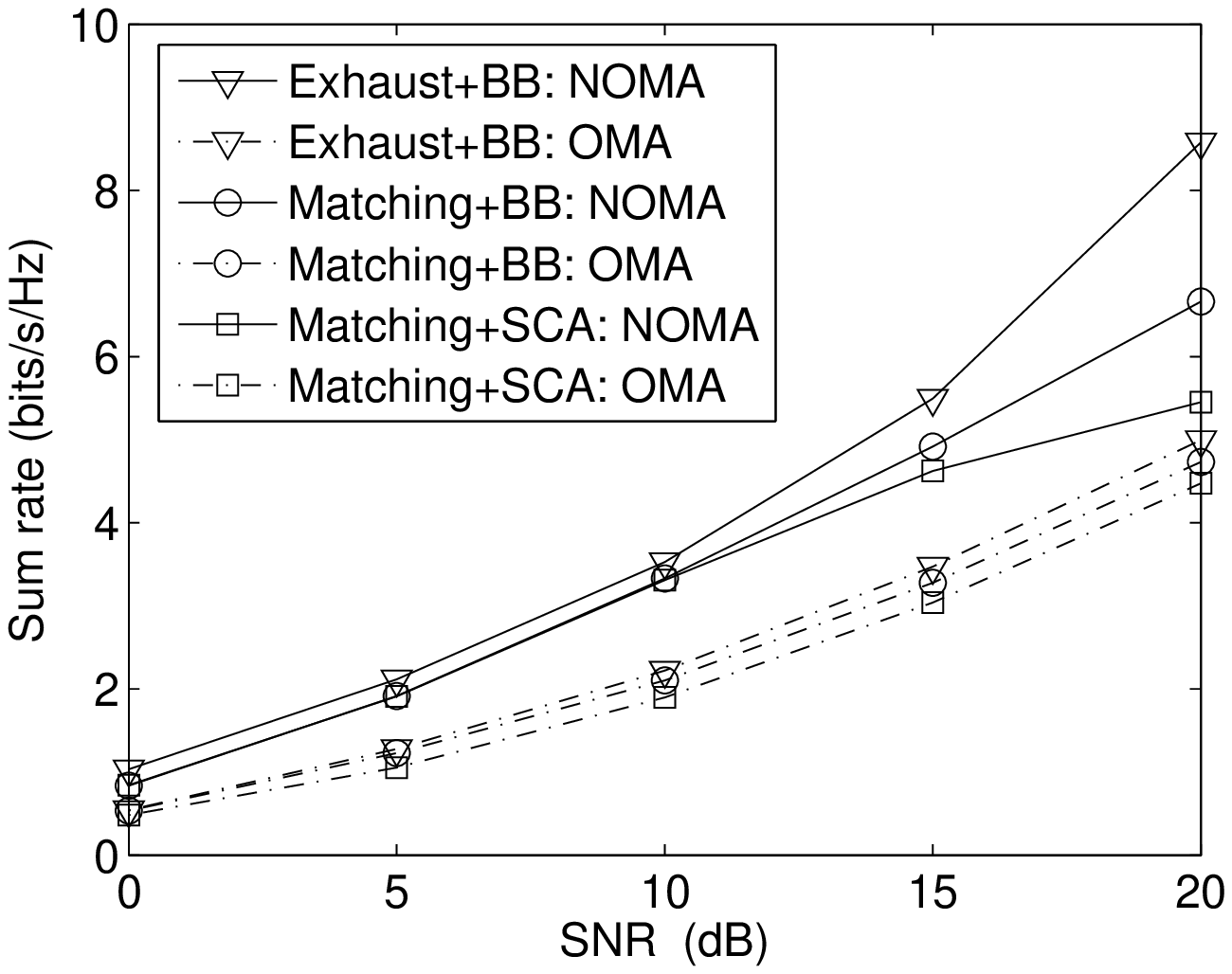}
 \vspace*{-1.5em} \caption{Comparisons of sum rate over different algorithms: $R_{th} = 0.1~ \mathrm{bits/s/Hz}$, and $R_c = 10$m.}\label{ComVsAlg.Fig}
 \end{minipage}
 \end{tabular}
\end{figure}
 \vspace{-1em}

\begin{figure} [t!]
\centering
\includegraphics[width= 3.8in, height=2.6in]{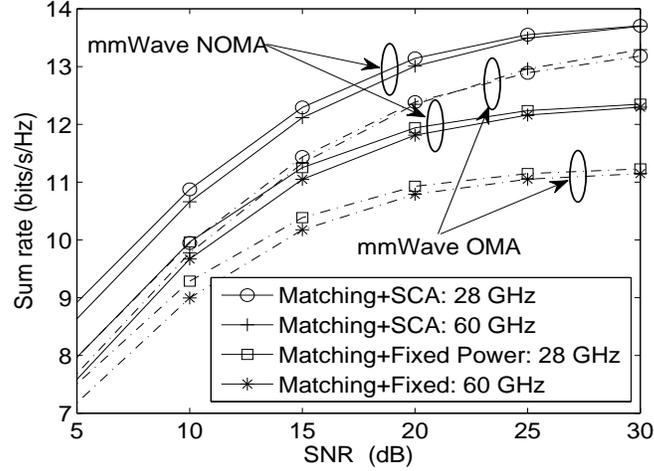}
 \vspace*{-1.5em} \caption{Comparisons of sum rate over different SNR at 28 GHz and 60 GHz: $R_{th} = 0.1~ \mathrm{bits/s/Hz}, ~R_c = 10$m. The LoS and NLoS path exponents are set based on the practical channel measurements \cite{Deng15ICCW,Rappaport12ICC}: $c_{\mathrm{LoS}} = 2$, $c_{\mathrm{NLoS}} = 3$ on $f_c=28$ GHz and $c_{\mathrm{LoS}} = 2.25$, $c_{\mathrm{NLoS}} = 3.71$ on $f_c=60$ GHz.}\label{ComVsFreq.Fig}
  \vspace{-1.5em}
\end{figure}

\begin{figure}[!htb] 
\begin{tabular}{cc}
\begin{minipage}[t]{0.48\linewidth}
\centering
\includegraphics[width= 1\linewidth]{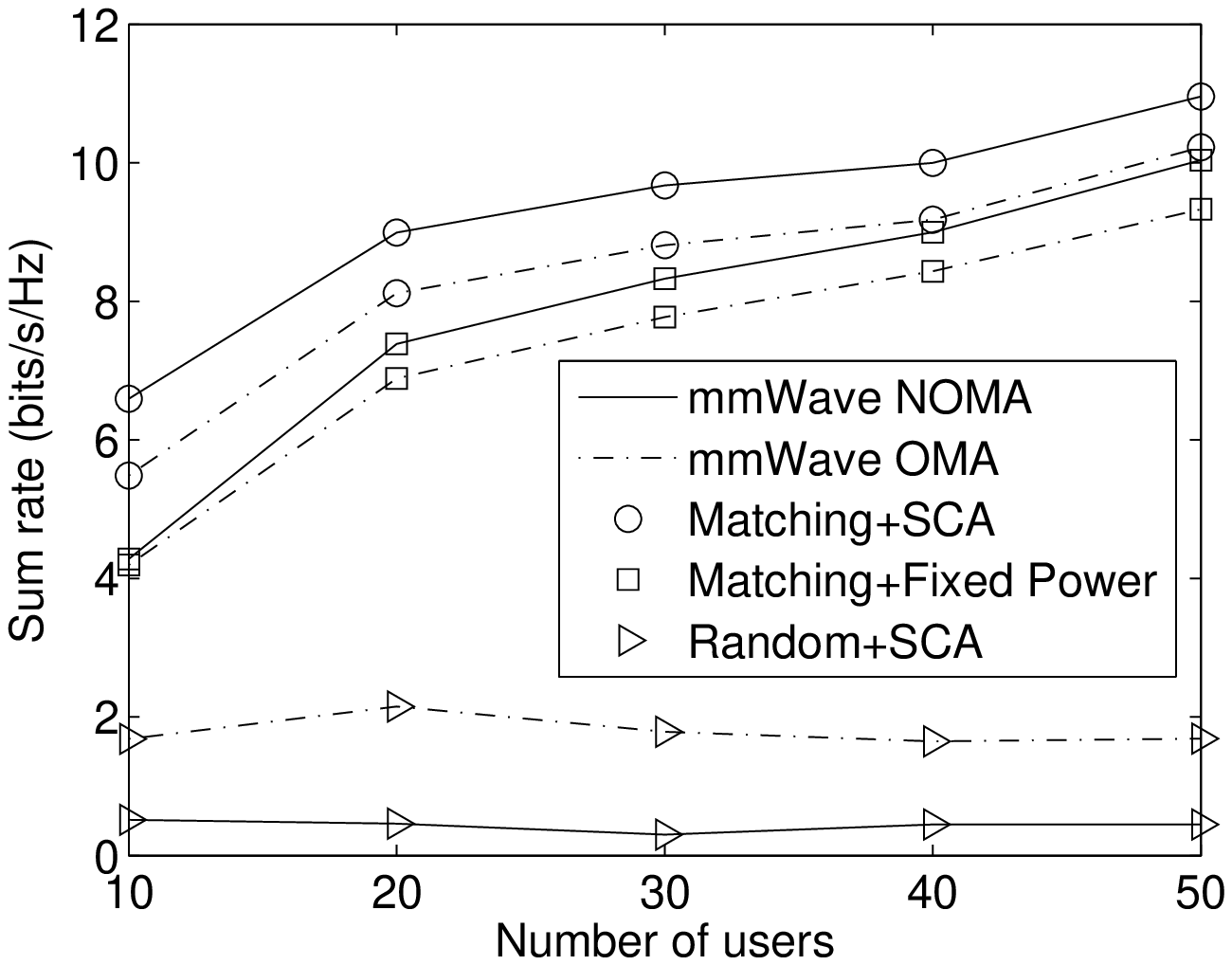}
 \vspace*{-1.5em} \caption{Comparisons of sum rate over SNR with different number of users: $R_{th} = 0.02~ \mathrm{bits/s/Hz},~R_c = 5$m and $\beta_1= \frac{1}{4}$ and $\beta_2 = \frac{3}{4}$.}\label{ComVsNumUsers.Fig}
 \end{minipage}
\begin{minipage}[t]{0.48\linewidth}
\centering
\includegraphics[width= 1\linewidth]{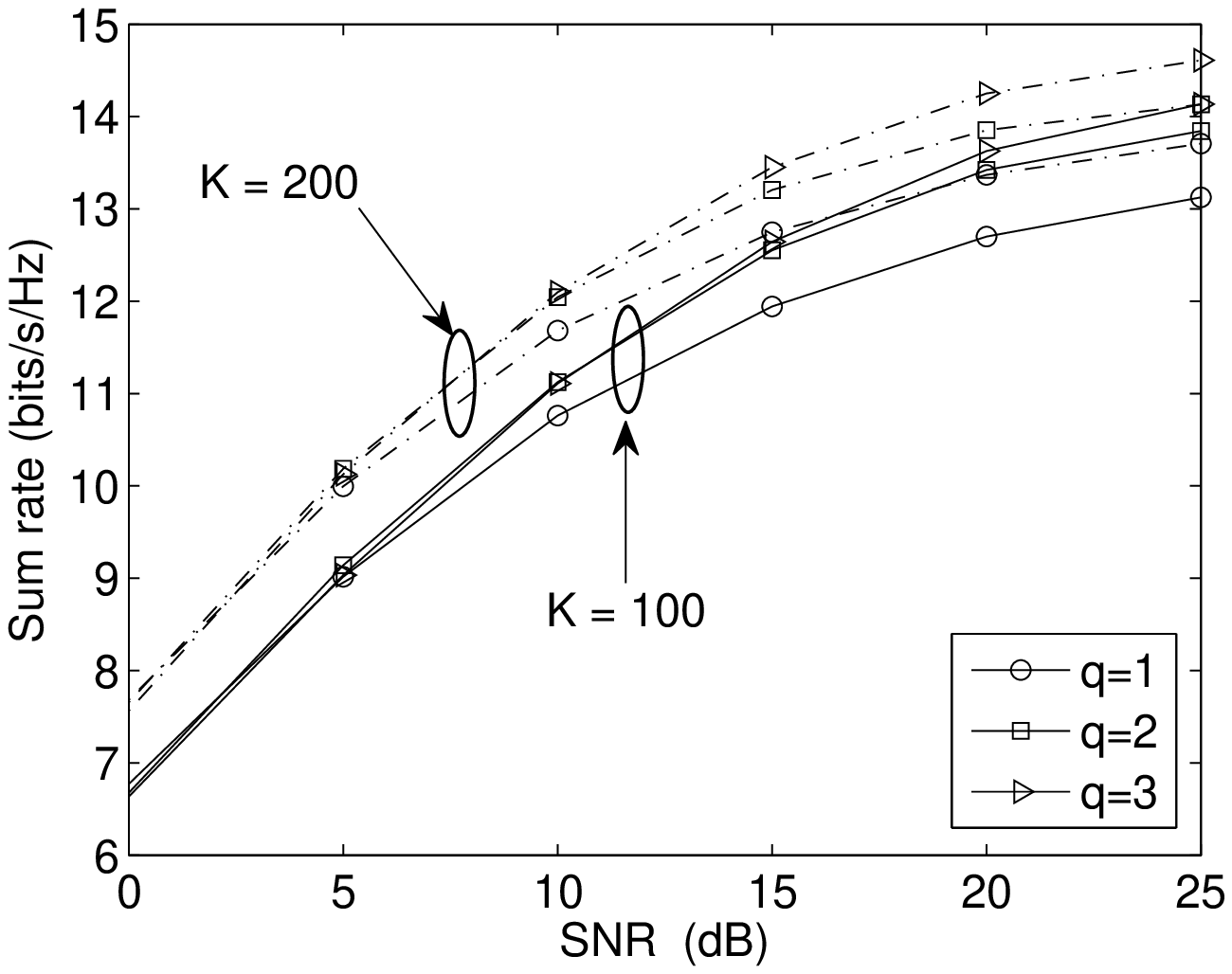}
 \vspace*{-1.5em} \caption{Comparisons of sum rate over SNR with different $q_m$: $R_{th} = 0.02~ \mathrm{bits/s/Hz},
  ~R_c = 10$m.}\label{ComVsqm.Fig}
  \end{minipage}
  \end{tabular}
   \vspace{-1.5em}
\end{figure}

  \vspace{-0.5cm} 
\section{Conclusions}
 \vspace{-0.3cm} 

In this paper, the designs of user scheduling and power allocation algorithms for mmWave NOMA systems with random beamforming were considered. Particularly, the formulated problem for the maximization of the sum rate of the mmWave NOMA system was a mixed integer programming. The original problem have been  into two subproblems and solved independently: 1) for the integer optimization of the user scheduling, exhaust search is adopted for a small scale problem; 2)  BB was applied for solving the  power allocation problem optimally. The generated optimal user scheduling and power allocation solution was served as a benchmark due to its prohibitive computational complexity. Moreover, a low complexity suboptimal algorithm was  developed to strike a trade-off between the performance and complexity, where user scheduling scheme and power allocation scheme  were designed based on matching theory and  SCA approach, respectively. Simulation results  have been showed that the proposed suboptimal algorithm   achieved a near optimal performance with low complexity compared to the global algorithm. In addition, our results showed that  the  sum rate of mmWave NOMA systems outperformed the conventional mmWave OMA systems.

 \vspace{-0.7em}
\numberwithin{equation}{section}
\section*{Appendix~A: Proof of Theorem \ref{Pro1}} \label{Appdx1}
\renewcommand{\theequation}{A.\arabic{equation}}
\setcounter{equation}{0}

Base on the computational complexity theory, to show the problem \eqref{opt1.eq} is NP-hard, we follows the following three steps: 1) choose a suitable known NP-complete decision problem $\mathcal{Q}$; 2) construct a polynomial time transformation from any instance of $\mathcal{Q}$ to an instance of problem \eqref{opt1.eq}; 3) prove the two instances have the same objective value under the transformation. 
In this paper,  to prove  problem \eqref{opt1.eq} is NP-hard, we divide the proof into two steps: $q_m = 1$ and $q_m > 1$.
\begin{itemize}
\item[1)] 
 We first consider the case $q_m=1$, \eqref{opt1.eq} becomes a joint power and user scheduling problem in the conventional OMA systems. The sum rate maximization problem in \eqref{opt1.eq} becomes the following  form: 
 {\setlength\abovedisplayskip{3pt} 
\setlength\belowdisplayskip{3pt}
 {\small
\begin{eqnarray}\label{opt1Simp1.eq}
\begin{aligned}
 \max_{\beta, c} \quad &  \sum_{m=1}^M R_{j \rightarrow j}^m \\
\mathrm{s.t.}\quad   & \sum_{m=1}^M  \beta_{j}^m \leq P_{tot},~j\in \mathcal{K},~m \in \mathcal{M},
\end{aligned}
\end{eqnarray}}}which has been proved to be NP-hard in \cite{Luo08JSAC}. 

\item[2)] When $q_m>1$, we prove that \eqref{opt1.eq}  is NP-hard even known the power allocation. In the following, we will construct an instance of problem \eqref{opt1.eq} with known power allocation coefficients.  First, the three-dimentional matching is known to be NP-hard.  We then consider an instance with $q_m=2$. Assuming that the users are  equally divided into two disjoined sets $\mathcal{K}_1$ and $\mathcal{K}_2$ satisfying the size $|\mathcal{K}_1|=|\mathcal{K}_2|=\frac{K}{2}$, $\mathcal{K}_1 \bigcup \mathcal{K}_2 = \mathcal{K}$ and $\mathcal{K}_1 \bigcap \mathcal{K}_2 = \emptyset$. In addition, we assume that the two  users $j$ and $k$ on beam $m$ are selected such that  $j \in \mathcal{K}_1$ and $K \in \mathcal{K}_2$, respectively.  Let $V$ be a subset of $\mathcal{M} \times \mathcal{K}_1 \times \mathcal{K}_2$, where the element $V_l = (m_l,k^1_l,k^2_l) \in V$. According to \eqref{Rate_m.eq}, the sum rate of any triple $V_l$ can be denoted as $\mathcal{H}_{V_l}$. Next, we need to determine if there exist a set $V' \subseteq V$ with the size $|V'|=\min \{M,\frac{K}{2}\}$ such that $\sum_{l=1}^{|V'|} \mathcal{H}_{V'_l} \leq \lambda$, where any $V'_l \in V'$ and $V'_n \in V'$ do not contain the same elements. Based on the definition, $V' \subseteq V$  will be a  three-dimentional matching when the following conditions hold: 1) $|V'|=\min \{M,\frac{K}{2}\}$; 2) For any two distinct triples: $(m_l,k^1_l,k^2_l) \in V$ and $(m'_l,k'^1_l,k'^2_l) \in V'$, we have $m_l\neq m'_l ,k^1_l \neq k'^1_l ,k^2_l \neq k'^2_l$. When $\lambda$ goes to nongative infinity,  problem \eqref{opt1.eq} with known power allocation coefficients becomes a three-dimensional matching problem. Therefore, the decision problem of the constructed instance is NP-complete and the corresponding instance is NP-hard.

Since a special case of problem \eqref{opt1.eq} is NP-hard, the original problem in \eqref{opt1.eq} is NP-hard.

\end{itemize} 
From the analysis of the above two cases, one can conclude that problem \eqref{opt1.eq} is NP-hard.

 \vspace{-0.9em}
\section*{Appendix~B: Proof of Theorem \ref{Theo1}} \label{Appdx2}
\renewcommand{\theequation}{B.\arabic{equation}}
\setcounter{equation}{0}

Theorem \ref{Theo1} is similar to the classical feasibility conditions in \cite{viswanath2003sum}. These conditions are derived based on Perron-Frobenius theory \cite{horn2012matrix} by assuming the primitiveness of $\boldsymbol{\Lambda}+ \mathbf{D}\mathbf{G}$. Different from  the conventional OMA systems, in which  only  the total transmission power constraint is considered, here we give a more general proof for NOMA system with the constraints of the deconding order.

To begin with, we show that $\rho(\boldsymbol{\Lambda} + \mathbf{D}\mathbf{G}) < 1$ is the necessary condition for $\underline{\Gamma} \in \mathcal{G}$.  Base on \eqref{PerrFro.eq}, we can construct the necessary condition for $\underline{\Gamma} \in \mathcal{G}$: if $\underline{\Gamma} \in \mathcal{G}$, then $ \exists \boldsymbol{\beta} \succeq \mathbf{0} $ such that 
{\setlength\abovedisplayskip{3pt} 
\setlength\belowdisplayskip{3pt}
 {\small
\begin{align}
\big( \mathbf{I}_{M_t} - (\boldsymbol{\Lambda} + \mathbf{D}\mathbf{G})\big) \boldsymbol{\beta} > \sigma^2  \mathbf{D} \boldsymbol{1}_{M_t}.
\end{align}}}ignoring the constraints in \eqref{C1.optbb1} and \eqref{C2.optbb1}.
Since each element of $\boldsymbol{\Gamma} $ satisfying  $\Gamma_{j_m\rightarrow j_m} \geq \bar{\gamma}_{j_m}$ is strict positive, which indicates that $\sigma^2  \mathbf{D} \boldsymbol{1}_{M_t} \succ 0$ and $\boldsymbol{\beta} \succ 0$. Based on these results, we can further refine the above necessary condition as follows:  if $\underline{\Gamma} \in \mathcal{G}$, then $ \exists \boldsymbol{\beta} \succeq \mathbf{0} $ such that 
{\setlength\abovedisplayskip{3pt} 
\setlength\belowdisplayskip{3pt}
 {\small
\begin{align}\label{PerrFro2.eq}
\big( \mathbf{I}_{M_t} - (\boldsymbol{\Lambda} + \mathbf{D}\mathbf{G})\big) \boldsymbol{\beta} \succ \boldsymbol{0},
\end{align}}}neglecting the constraints in \eqref{C1.optbb1} and \eqref{C2.optbb1}. Then based on the properties of the Perron-Frobenius eigenvalue stated in \cite{horn2012matrix}, a positive solution to $\boldsymbol{\beta}$ that satisfies \eqref{PerrFro2.eq} exists is and only if $\rho\big((\boldsymbol{\Lambda} + \mathbf{D}\mathbf{G})\big) <1$. Consequently, we the above necessary condition can be equivalently expressed as:  if $\underline{\Gamma} \in \mathcal{G}$, then $\rho\big((\boldsymbol{\Lambda} + \mathbf{D}\mathbf{G})\big) <1$. By contrast,  if  $\rho\big((\boldsymbol{\Lambda} + \mathbf{D}\mathbf{G})\big) \geq 1$, then $\underline{\Gamma} \notin \mathcal{G}$.

The second condition 2) follows from {\bf Proposition \ref{Pro2}}, where the SINR constraint in \eqref{C1.optbb2} are such that equalities, i.e., $\big( \mathbf{I}_{M_t} - (\boldsymbol{\Lambda} + \mathbf{D}\mathbf{G})\big) \boldsymbol{\beta} = \sigma^2  \mathbf{D} \boldsymbol{1}_{M_t}$. Moreover, $\rho(\boldsymbol{\Lambda} + \mathbf{D}\mathbf{G})\big) <1$, consequently, $\mathbf{I}_{M_t} - (\boldsymbol{\Lambda} + \mathbf{D}\mathbf{G})$ is invertible and its inverse has nonnegative entries, i.e., $\mathbf{I}_{M_t} - (\boldsymbol{\Lambda} + \mathbf{D}\mathbf{G})^{-1} \succeq \mathbf{0}$ \cite{horn2012matrix}. Thus, $\boldsymbol{\beta} = \big(\mathbf{I}_{M_t} - (\boldsymbol{\Lambda} + \mathbf{D}\mathbf{G}) \big)^{-1}\sigma^2  \mathbf{D} \boldsymbol{1}_{M_t} \succ \mathbf{0}$.

The second part of 2) is to showing that $\boldsymbol{\beta}^* = \big(\mathbf{I}_{M_t} - (\boldsymbol{\Lambda} + \mathbf{D}\mathbf{G}) \big)^{-1}\sigma^2  \mathbf{D} \boldsymbol{1}_{M_t}$ is the minimum power vector which sttisfies the SINR constraints in \eqref{C1.optbb2}. It is equivalently to verify that $\boldsymbol{\beta}^*$ is the optimal solution of the following  linear  vector  optimization  problem:
{\setlength\abovedisplayskip{3pt} 
\setlength\belowdisplayskip{3pt}
 {\small
\begin{eqnarray}\label{OptPower.eq}
\begin{aligned}
\min_{\boldsymbol{\beta}} \quad  \boldsymbol{\beta} \quad
\mathrm{s.t.} \quad  \big( \mathbf{I}_{M_t} - (\boldsymbol{\Lambda} + \mathbf{D}\mathbf{G})\big) \boldsymbol{\beta}  \succeq \sigma^2  \mathbf{D} \boldsymbol{1}_{M_t},
\end{aligned}
\end{eqnarray}}}which is convex \cite{boyd2004convex}. Hence the optimal solution satisfies thee KKT conditions, which are given as follows:
{\setlength\abovedisplayskip{3pt} 
\setlength\belowdisplayskip{3pt}
 {\small
\begin{subequations}
\begin{align}
& (\boldsymbol{\Lambda} + \mathbf{D}\mathbf{G})\boldsymbol{\lambda} = \mathbf{I}  , \label{eq1.kkt}\\
& \Big(\big( \mathbf{I}_{M_t} - (\boldsymbol{\Lambda} + \mathbf{D}\mathbf{G})\big) \boldsymbol{\beta} - \sigma^2  \mathbf{D} \boldsymbol{1}_{M_t}\Big)\boldsymbol{\lambda} = \mathbf{0},\label{eq2.kkt}\\
 &\big( \mathbf{I}_{M_t} - (\boldsymbol{\Lambda} + \mathbf{D}\mathbf{G})\big) \boldsymbol{\beta}  \succeq \sigma^2  \mathbf{D} \boldsymbol{1}_{M_t}.\label{eq3.kkt}
\end{align}
\end{subequations}}}where $\boldsymbol{\lambda} \succ \mathbf{0}$ is the  Lagrange multiplier vector. From \eqref{eq1.kkt} and \eqref{eq2.kkt}, one can obtain that $\boldsymbol{\lambda} = \mathbf{0}$. Then from \eqref{eq2.kkt} and \eqref{eq3.kkt}, it can be derived that $\big( \mathbf{I}_{M_t} - (\boldsymbol{\Lambda} + \mathbf{D}\mathbf{G})\big) \boldsymbol{\beta} = \sigma^2  \mathbf{D} \boldsymbol{1}_{M_t}$. Therefore, the optimal solution of \eqref{OptPower.eq} is give by $\boldsymbol{\beta}^* = \big(\mathbf{I}_{M_t} - (\boldsymbol{\Lambda} + \mathbf{D}\mathbf{G}) \big)^{-1}\sigma^2  \mathbf{D} \boldsymbol{1}_{M_t}$. As a result, if  $\sum_{m=1}^M \sum_{j_m=1}^{q_m} \beta_{j_m}^m > P_{tot}$, $\boldsymbol{\Gamma} \notin \mathcal{G}$.

Finally, we prove the condition 3) in {\bf Theorem \ref{Theo1}}. Since the constraints  of SIC decoding order, some solution attained by condition 3) may not satisfy the inequalities in \eqref{C1re.optbb2}. If so, the optimal power allocation can be obtained by solving \eqref{optbb2.eq} directly.


\vspace{-0.3cm}
{\small
 \bibliographystyle{IEEEtran}
  \linespread{1.1}\selectfont
\bibliography{ref[NOMA]mmWave}
}
\end{document}